\numberwithin{equation}{section}
\numberwithin{figure}{section}
\newtheorem{theorem}{Theorem}
\newtheorem{lemma}{Lemma}
\newtheorem{definition}{Definition}
\DeclareMathAlphabet{\mathpzc}{OT1}{pzc}{m}{it}
\newcommand{\ebar}{\overline{\epsilon^\mathrm{out}}}
\newcommand{\mF}{{\mathbb{F}}}
\newcommand{\FF}{{\mathbb{F}}}
\newcommand{\mC}{{\mathcal C}}
\newcommand{\mG}{{\mathcal G}}
\newcommand{\mD}{{\mathcal D}}
\newcommand{\spn}{{\rm span}}
\newcommand{\cM}{{\mathcal M}}
\newcommand{\pr}{{\rm Pr}}
\newcommand{\scc}{{\rm success}}
\newcommand{\onev}{\vec {\bf 1}}
\newcommand{\bu}{\vec{\bf u}}
\newcommand{\bv}{\vec{\bf v}}
\newcommand{\bw}{\vec{\bf w}}
\newcommand{\bg}{\vec{\bf g}}
\newcommand{\mI}{{\mathbb I}}
\newcommand{\ein}{\epsilon_{in}}
\newcommand{\eout}{\epsilon_{out}}
\newcommand{\nCCZb}{\overline{n_{CCZ}}}
\newcommand{\nTb}{\overline{n_{T}}}
\newcommand{\nin}{{n_\mathrm{inner}}}
\newcommand{\kin}{{k_\mathrm{inner}}}
\newcommand{\sgate}{S}
\newcommand{\cS}{{\mathcal{S}}}
\newcommand{\nTbar}{{\overline{n_T}}}
\newcommand{\geven}{{G_{0}}}
\newcommand{\mgeven}{{{\mathcal G}_0}}
\newcommand{\mgodd}{{{\mathcal G}_T}}
\newcommand{\godd}{{G_{T}}}
\newcommand{\keven}{{k_0}}
\newcommand{\kodd}{{k_T}}
\begin{document}

\title{Codes and Protocols for Distilling $T$, controlled-$S$, and Toffoli Gates}

\author{Jeongwan Haah}
\affiliation{Quantum Architectures and Computation Group, Microsoft Research, Redmond, WA 98052, USA}

\author{Matthew B.~Hastings}

\affiliation{Station Q, Microsoft Research, Santa Barbara, CA 93106-6105, USA}
\affiliation{Quantum Architectures and Computation Group, Microsoft Research, Redmond, WA 98052, USA}

\begin{abstract}
We present several different codes and protocols 
to distill $T$, controlled-$S$, and Toffoli (or $CCZ$) gates.
One construction is based on codes that generalize the triorthogonal codes of Ref.~\onlinecite{bh}, 
allowing any of these gates to be induced at the logical level by transversal $T$.
We present a randomized construction of generalized triorthogonal codes 
obtaining an asymptotic distillation efficiency $\gamma\rightarrow 1$.
We also present a Reed-Muller based construction of these codes 
which obtains a worse $\gamma$ but performs well at small sizes.
Additionally, we present protocols based on checking the stabilizers of $CCZ$ magic states 
at the logical level by transversal gates applied to codes; 
these protocols generalize the protocols of Ref.~\onlinecite{HHPW}.
Several examples, including a Reed-Muller code for $T$-to-Toffoli distillation, 
punctured Reed-Muller codes for $T$-gate distillation, 
and some of the check based protocols, 
require a lower ratio of input gates to output gates 
than other known protocols at the given order of error correction for the given code size.
In particular, we find a code with parameters $[[512,30,8]]$ that distills $512$ T-gates to $10$ Toffoli gates as well
as triorthogonal codes with parameters $[[887,137,5]],[[912,112,6]],[[937,87,7]]$ with very low prefactors in front of the leading order error terms in those codes.
\end{abstract}

\maketitle

\section{Introduction}
Magic state distillation~\cite{Knill2004a,Knill2004b,BravyiKitaev2005Magic}
is a standard proposed approach to implementing a universal quantum computer.
This approach
begins by implementing the Clifford group
to high accuracy
using either stabilizer 
codes~\cite{Gottesman1996Saturating,CalderbankRainsShorEtAl1997Quantum} 
or using Majorana fermions~\cite{karzig2017scalable}.
Then, to obtain universality, some non-Clifford operation is necessary,
such as the $\pi/4$-rotation (T-gate) or the Toffoli gate (or $CCZ$ which is equivalent to Toffoli up to conjugation by Cliffords).
These non-Clifford operations are implemented
using a resource, called a magic state, which is injected into a circuit that uses only Clifford operations.

Since these magic can produce non-Clifford operations, they cannot themselves be produced by Clifford operations.
Instead, in distillation, the Clifford operations are used to distill a small number of high accuracy magic states
from a larger number of low quality magic state.
There are many proposed protocols to distill magic states:
for $T$ gates from $T$ gates~\cite{Knill2004a,BravyiKitaev2005Magic,MeierEastinKnill2012Magic-state,bh,Jones2012,HHPW}, 
for Toffoli gates from $T$-gates~\cite{eastin2013distilling,jones2013low,Jones2013composite,HHPW,campbell2017prl,campbell2017unified},
for Fourier states from Toffoli gates~\cite{Jones2013fourier},
$CCZ$(Toffoli) states from $CCZ$ gates~\cite{Paetznick2013}.

In such distillation architectures, the resources (space, number of Clifford operations, and number of noisy non-Clifford operations) required to distill magic states far exceed the resources required to implement most quantum algorithms using these magic states.
Hence, improvements in distillation efficiency can greatly impact the total resource cost.

This paper presents a variety of loosely related ideas in distillation.
One common theme is exploring various protocols to distill magic states 
for Toffoli, controlled-$S$, as well as $T$-gates. 
We present several approaches to this.
We use a generalization of triorthogonal codes~\cite{bh} to allow this distillation.
In section \ref{randcon}, we give a randomized construction of such codes 
which achieves distillation efficiency~\cite{bh} $\gamma\rightarrow 1$;
this approach is of some theoretical interest 
because not only is the distance of the code fairly large 
(of order square-root number of qubits) 
but also the least weight stabilizer has comparable weight.
In section \ref{rmcbd}, we give another approach based on Reed-Muller codes.
In addition to theoretical asymptotic results here, 
we also find a particularly striking code 
which distills $512$ $T$-gates into $10$ $CCZ$ magic states 
while obtaining eight order reduction in error.
We also present approaches to distilling Toffoli states 
which are not based on a single triorthogonal (or generalized triorthogonal code) 
but rather on implementing a protocol using a sequence of checks, 
similar to Ref.~\onlinecite{HHPW}.
As in Ref.~\onlinecite{HHPW} we use inner codes to measure various stabilizers of the magic state.
We present two different methods of doing this,
one based on hyperbolic inner codes in section \ref{sec:hyper} 
and one based on normal inner code in section \ref{sec:normal} 
(hyperbolic and normal codes were called even and odd inner codes, respectively, 
in an early version of Ref.~\onlinecite{HHPW}).

In addition to these results for distilling Toffoli states, 
we present other results useful specifically for distilling $T$-gates.
In particular, in \ref{prmc} we study punctured Reed-Muller codes 
and find some protocols with a better ratio of input $T$-gates to output $T$-gates
than any other known protocol for certain orders of error reduction.  Another result in \ref{tradeoff} is a method of reducing the space required for any protocol based on triorthogonal codes at the cost of increased depth.

We use matrices $\sgate = \mathrm{diag} (1, i)$, 
and $T = \mathrm{diag} (1, e^{i \pi/4} )$.
Any subscript $T$ denotes connection to the magic state for $T$ gate.

\section{Triorthogonal Matrices: Definitions and Generalizations}
\subsection{Definitions}
\label{sec:def}

We consider classical codes with $n$ bits, so that code words are vectors in $\mathbb{F}_2^n$.
Given a vector $\bu$, let $|\bu|$ denote the Hamming weight, i.e., the number of nonzero entries of $\bu$.
Given a vector $\bu$, let $\bu_i$ denote the $i$-th entry of $\bu$.
Given two vectors $\bu,\bv$, let $\bu \wedge \bv$ denote the entry wise product of $\bu$ and $\bv$,
i.e., $(\bu \wedge \bv)_i = \bu_i \bv_i$.
Let $\bu \cdot \bv$ denote the inner product, so that $\bu \cdot \bv=\sum_i \bu_i \bv_i$, where the sum is taken modulo $2$.

For us, a classical code $\mC$ will always refer to a linear subspace of $\FF_2^n$.
Given two classical codes $\mC,\mD$, 
let $\mC \wedge \mD$ denote the subspace spanned by vectors $\bu \wedge \bv$ for $\bu \in \mC$ and $\bv \in \mD$.
We will write $\mC^{\wedge 2}$ to mean $\mC \wedge \mC$.
Note that $\mC^{\wedge 2}$ can be a proper superset of $\mC$. 
Given a code $\mC$, let $\mC^\perp$ denote the dual code, i.e. for any vector $\bv$, we have $\bv \in \mC^\perp$ if and only if $\bv \cdot \bu=0$ for all $\bu \in \mC$.
Given two codes, $\mC,\mD$, let $\spn(\mC,\mD)$ denote the span of $\mC$ and $\mD$.

Following Bravyi and Haah~\cite{bh}, 
a binary matrix $G$ of size $m$-by-$n$ is called {\bf triorthogonal}
if 
\begin{align}
\label{c1}
\sum_{j=1}^n G_{a,j} G_{b,j} = 0 \mod 2,
\end{align}
for all pairs $1 \leq a < b \leq m$,
and
\begin{align}
\label{c2}
\sum_{j=1}^n G_{a,j} G_{b,j} G_{c,j} =0 \mod 2,
\end{align}
for all triples of rows $1 \leq a <b <c \leq m$.

Further, we will always assume that the first $\kodd$ rows of $G$ have odd weight, 
i.e. $\sum_{j=1}^n G_{a,j}=1 \mod 2$ for $1 \leq a \leq \kodd$
and the remaining rows have even weight, 
i.e., $\sum_{j=1}^n G_{a,j}=0 \mod 2$ for $\kodd+1 \leq a \leq n$.
(The notation $k_1$ instead of $\kodd$ was used in Ref.~\onlinecite{bh}.)
Let
\begin{align}
\keven=m-\kodd.
\end{align}

Let $\mgeven$ denote the span of the even weight rows of $G$.  
Let $\mgodd$ denote the span of the odd weight rows of $G$.  
Let $\mG$ denote the span of all the rows of $G$.
In association with a triorthogonal matrix
we define a {\bf triorthogonal code}, a quantum CSS code,
by letting $\mgeven$ correspond to $X$-stabilizers,
and $\mG^\perp$ to $Z$-stabilizers.
The distance of a triorthogonal matrix $G$ is defined to be 
the minimum weight of any nontrivial $Z$-logical operators 
of the corresponding triorthogonal code,
i.e., the minimum weight 
of a vector $\bu$ such that $\bu \in \mgeven^\perp$ but $\bu \not \in \mG^\perp$.
The distance of any subspace $\mC$ is defined to be 
the minimum weight of any nonzero vector in that subspace.
Clearly, the distance of a triorthogonal matrix $G$ 
is at least the distance of the subspace $\mgeven^\perp$.

\subsection{Triorthogonal Spaces and Punctured Triorthogonal Matrices}
\label{triopunc}

Let us define a ``triorthogonal subspace'' 
to be a subspace $\mC$ such that for any $\bu,\bv,\bw \in \mC$, we have
$| \bu \wedge \bv \wedge \bw| = 0 \mod 2$.
Given a triorthogonal matrix $G$, the vector space $\mgeven$ is a triorthogonal space.
Thus, any $\keven$-by-$n$ matrix whose rows span $\mgeven$ is a triorthogonal matrix.
However, if $\kodd\neq 0$, then the span of the rows of $G$ is not a triorthogonal space.

In this regard, we note the following.
Let $G$ be an arbitrary triorthogonal matrix of the form
\begin{align}
G=\begin{bmatrix}
\godd \\
\geven
\end{bmatrix},
\end{align}
where $\godd$ is $\kodd$-by-$n$ (and contains the odd weight rows of $G$) and $\geven$ is $\keven$-by-$n$ (and contains the even weight rows of $G$).
Consider the matrix
\begin{align}
\label{tG}
\tilde G=\begin{bmatrix}
I & \godd \\
0 & \geven
\end{bmatrix},
\end{align}
where $I$ denotes a $\kodd$-by-$\kodd$ identity matrix 
and $0$ denotes the zero matrix of size $\keven$-by-$\kodd$.
This matrix $\tilde G$ is a triorthogonal matrix with all rows having even weight,
and its row span defines
a triorthogonal space $\tilde \mG$.
Thus, from a triorthogonal matrix, we can construct a triorthogonal space by adding $\kodd$ additional coordinates to the vector and padding the matrix by $I$.

We now show a converse direction, based on the idea of puncturing a code.
Given any subspace $\tilde \mC$ of dimension $m$,
there exists a matrix $\tilde G$ whose rows form a basis of $\tilde \mC$
(after possibly permuting the coordinates of the space)
such that
\[
\tilde G = \begin{bmatrix} I_m & P \end{bmatrix} = \begin{bmatrix} I_{\kodd} & 0 & P_T \\ 0 & I_{\keven} & P_0 \end{bmatrix},
\]
for some matrix $P$, where $I_m$ is an $m$-by-$m$ identity matrix.
Such a matrix in the {\bf reduced row echelon form} 
is unique once an ordering of coordinate is fixed,
and can be computed by Gauss elimination from any spanning set for $\tilde \mC$.
Choose any $\kodd$ such that $0 \leq \kodd \leq m$.
Let $P_T$ be the first $\kodd$ rows of $P$ and let $P_0$ be the remaining rows of $P$.
Let $\godd = \begin{bmatrix} 0 & P_T\end{bmatrix}$, 
where $0$ is the $\kodd$-by-$\keven$ zero matrix,
and let $\geven=\begin{bmatrix} I_\keven & P_0\end{bmatrix}$,
where $I_\keven$ is the $\keven$-by-$\keven$ identity matrix.
Then, the matrix
\[
\begin{bmatrix} \godd \\ \geven \end{bmatrix} = \begin{bmatrix} 0 & P_T \\ I_\keven & P_0 \end{bmatrix}
\]
is a triorthogonal matrix. We say that this matrix is obtained by ``puncturing" the previous code on the given coordinates.
By the uniqueness of the reduced row echelon form,
the matrices $\godd$ and $\geven$ are determined by 
$\tilde \mC$, $\kodd$, and the ordering of the coordinates.

This idea of padding is related to the following protocol for distillation~\cite{Fowler2012}.
We consider $\kodd = 1$ for the moment, but a generalization to a larger $\kodd$ is straightforward.
Observe that on a Bell pair $\ket \phi = \ket{00} + \ket{11}$ (we ignore global normalization factors),
the action of $T$ on the first qubit is the same as $T$ on the second: $T_1 \ket\phi = T_2 \ket \phi$.
Once we have $T_2 \ket \phi$, suppose we measure out the second qubit onto $\ket +$.
The state on the first qubit is then the magic state $T_1 \ket + = \bra{+_2} T_2 \ket \phi$.
If we instead measure the second qubit in the $\ket -$ state, we can apply a Pauli correction
to bring the first qubit to the desired magic state.
If the second qubit of this Bell pair is a logical qubit of a code, 
where the logical $\bar T$ can be fault-tolerantly implemented, 
then the above observation enables fault-tolerant creation of the magic state.

The protocol is thus as follows.  Consider a triorthogonal code defined by some matrix $G$; for brevity, we also refer to this code as $G$ below.
Let $\tilde \mG$ be the space obtained by padding as above.
(i) Create a Bell pair $\ket{0 \bar 0} + \ket{1 \bar 1}$ where the second qubit is embedded in the code $G$.
The Bell pair is the eigenstate of $X \bar X$ and $Z \bar Z$,
which is simply the state stabilized by $X(v)$ for any $v$ in the triorthogonal space $\tilde \mG$,
and by $Z(v')$ for any $v'$ in $\tilde \mG^\perp$.
Thus, this step can be implemented by a circuit consisting of control-NOTs.
This circuit can be thought of as the preparation circuit of the superposition of all {\it classical} code words of $\tilde \mG$.
(ii) Apply the transversal $T$ gate on the qubits of $G$, followed by possible Clifford corrections;
these Clifford corrections are either phase gate or control-Z~\cite{bh}.
(iii) Project the logical qubit of the code onto a $\ket{ \bar +}$ or $\ket{\bar -}$ state.
This step can be done simply by measuring individual qubits of the code $G$ in the $X$ basis
without inverse-encoding, and classical post-processing.
The reason is that $X$ operator on individual qubits commutes with logical $X$ of the code,
and hence after the $X$ measurement, the state of the qubits that comprised the code 
is some eigenstate of the logical $X$ operator.
The eigenvalue of this logical $X$ can be inferred by taking the parity of the measurement outcomes,
and if necessary we apply a Pauli correction
to the magic state on the other side of the initial Bell pair.
The eigenvalues of the $X$ stabilizers of the code can also be checked similarly and we post-select on these being in the $+$ state.

This protocol is particularly simple to describe in the case that the matrix $G$ is obtained by puncturing some
triorthogonal subspace $\tilde \mG$ on some set of coordinates.  
Then the protocol is: prepare the  superposition of all {\it classical} code words of $\tilde \mG$, 
then apply a transversal $T$ gate on all unpunctured coordinates (followed possibly by a Clifford correction), 
then measure all unpunctured coordinates, and finally, 
if classical postprocessing shows that all stabilizers are in the $+$ state, 
the punctured coordinates are in the desired magic state 
(up to a Pauli correction which is determined by the classical post-processing).

This protocol is different from preparing encoded $\ket{\tilde +}$, applying $\bar T$, and inverse-encoding,
in that the Clifford depth is smaller.
The only Clifford cost is in the initial preparation of the pre-puncture stabilizer state,
and the Clifford correction after $T$.
The Clifford correction after $T$ is absent if the pre-puncture code $\tilde \mG$ is triply even.

\subsection{Generalized Triorthogonal Matrices: $T$-to-$CCZ$ Distillation}
\label{sec:gtrio-def}

Let us now generalize the definition of triorthogonal matrices.
This generalization has appeared in \cite[App.~D]{campbell2017unified},
upon which the ``synthillation" protocols are built.
Our definition is a special case in that we consider only codes that distill $T$-gates, controlled-$S$ gates, and $CCZ$ gates, 
rather than arbitrary diagonal matrices at the third level of the Clifford hierarchy.  
On the other hand, we will present codes of arbitrary distance, 
rather than just distance $2$ of \cite{campbell2017unified}.

\begin{definition}
A $(k_T + 2k_{CS} + 3 k_{CCZ})$-by-$n$ binary matrix $G$ is 
{\bf generalized triorthogonal} if it can be written up to permutations of rows as
\begin{align}
    G = \begin{bmatrix} G_T \\ G_{CS} \\ G_{CCZ} \\ G_0 \end{bmatrix}
    \label{eq:gtrio}
\end{align}
where $G_T$ has $k_T$ rows, $G_{CS}$ has $k_{CS}$ pairs of rows, and $G_{CCZ}$ has $k_{CCZ}$ triples of rows
such that
\begin{align}
    \sum_{i=1}^n G_{a,i} G_{b,i} G_{c,i} \mod 2 = 
    \begin{cases}
        1 & \text{if } a=b=c=1,\ldots,k_T, \\
        1 & \text{if } \begin{cases} 
            a=b=k_T + 2i-1,\\ 
            c= k_T + 2i \end{cases} \text{ for } i = 1,\ldots,k_{CS},\\
        1 & \text{if } \begin{cases} 
            a= k_T + 2k_{CS} + 3i-2,\\
            b = k_T + 2k_{CS} + 3i-1,\\ 
            c= k_T + 2k_{CS} + 3i
            \end{cases} \text{ for } i = 1,\ldots, k_{CCZ},   \\
        0 & \text{otherwise.}
    \end{cases} 
\end{align}
\end{definition}

Such a generalized triorthogonal matrix can be used to distill $n$ $T$-gates 
into $\kodd$ $T$-gates, $k_{CS}$ controlled-$S$ gates, and $k_{CCZ}$ $CCZ$ gates, 
where the $CCZ$ gate is a controlled-controlled-Z gate 
which is conjugate to the Toffoli gate by Clifford operations.
Define a quantum code on $n$ qubits.  Take $X$-type stabilizers of the quantum code which correspond to rows of $\geven$ (i.e., for each row of $\geven$, there is a generator of the stabilizer group which is a product of Pauli $X$ on all qubits for which there is a $1$ entry in that row of $\geven$).
For each row of $\godd,G_{CS}$ and $G_{CCZ}$ there is one logical qubit, with logical $X$-type operators corresponding to the row.  The corresponding $Z$-type logical operators can be determined by the requirement that they commute with the $X$-type stabilizers and by the commutation relations for logical $X$ and $Z$ operators.
Finally, the $Z$-type stabilizers of the code are the maximal set of operators that commutes with all logical operators and $X$-type stabilizers.
It is easy to show, by generalizing the arguments of Ref.~\onlinecite{bh}, that applying a $T$-gate to every qubit will
apply $T$-gates to the logical qubits corresponding to rows of $\kodd$ and will apply controlled-$S$ gates to each pair of logical qubits corresponding to a pair of rows of $G_{CS}$, and will apply $CCZ$ 
gates to each triple of logical qubits corresponding to a triple of rows of $G_{CCZ}$, up to an overall Clifford operation on the logical qubits.
Input errors are detected up to an order given by the distance of the code, where
the distance of a generalized triorthogonal matrix $G$ is defined to be the minimum weight of a vector $\bu$ such that $\bu \in \mgeven^\perp$ and such that $\bu \not \in {\rm span}(\mgodd,{\mathcal G}_{CS},{\mathcal G}_{CCZ})^\perp$, with ${\mathcal G}_{CS},{\mathcal G}_{CCZ}$ being the row spans of $G_{CS},G_{CCZ}$ respectively.

To generalized triorthogonal matrices,
the puncturing and padding in the previous subsection does not immediately carry over.
However, the connection is retained if we consider the puncturing or padding
in the following way.
Let $G$ be a generalized triorthogonal matrix in the form of \eqref{eq:gtrio}
with $k_T$ rows in $G_T$, $k_{CS}$ rows in $G_{CS}$ and $k_{CCZ}$ rows in $G_{CCZ}$,
and let $F$ be another generalized triorthogonal matrix in the form of \eqref{eq:gtrio}
with the same corresponding number of rows in the upper three blocks.
Combine the submatrices of $G$ and $F$ as
\begin{align}
    \tilde G^F :=
\begin{bmatrix}
    F_T & G_T \\
    F_{CS} & G_{CS} \\
    F_{CCZ} & G_{CCZ} \\
    0 & G_0 \\
    F_0 & 0
\end{bmatrix}.
\end{align}
For example, if $G$ is triorthognal where $k_{CS} = k_{CCZ} = 0$,
then $F$ can be such that $F_T = I_{k_T}$ and $F_{CS} = F_{CCZ} = F_{0} = 0$
(this example is precisely the padding in the previous subsection).
Generally, the rows of $\tilde G ^F$ spans a genuine triorthogonal subspace,
and whenever $F$ is canonical (due to e.g. linear independence)
we can recover $G$ from $\tilde G^F$,
the procedure of which amounts to the puncturing.
The distillation protocol of the previous subsection carries over 
to this generalized punctured code;
the only change is that one generally has to inverse-encode 
the logical qubits of the triorthogonal code of $F$.

\subsection{Space-Time Tradeoff For Triorthogonal Codes}
\label{tradeoff}

We now briefly discuss a way of reducing the space required in any protocol based on a triorthogonal code, 
at the cost of increasing circuit depth.
Consider a code with a total of $k$ logical qubits ($k=k_T+2k_{CS}+3k_{CCZ}$), 
a total of $n_X$ $X$-type stabilizer generators, and $n_Z$ $Z$-type stabilizer generators.
The number $n_X$ is equal to the number of rows of $\geven$.
The usual protocol to prepare magic states is to first initialize the logical qubits in the $\ket +$ state,
encode, then apply transversal $T$, measure stabilizers, and, if no error is found, finally decode yielding the desired magic states.
It is possible to implement this protocol using only $k+n_X$ total qubits as follows.

The idea is to always work on the {\it unencoded} state,
but we instead {\it spread} potential errors so that we can detect them.
Recall that encoding is done by preparing a total of $n_X$ ancilla qubits 
in the $\ket +$ state (call these the $X$ ancilla qubits), 
a total of $n_Z$ ancilla qubits in the $\ket 0$ state (call these the $Z$ ancilla qubits), and applying a Clifford.  
Call this Clifford $U$.
Then, an equivalent protocol is: prepare a total of $n_X$ ancilla qubits in the $\ket +$ state,
a total of $n_Z$ ancilla qubits in the $\ket 0$ state, and apply $\prod_{j=1}^n U^\dagger \exp(i\pi Z_j/8) U$, 
then measure whether all the $X$ ancilla qubits are still in the $\ket +$ state.
(There is no need to check the $Z$ ancilla qubits since our error model has only $Z$ errors after twirling.)

The operator $U^\dagger \exp(i \pi Z_j/8) U$ is equal to $\exp(i \pi P_j/8)$ where $P_j = U^\dagger Z_j U$,
which is a product of Pauli $Z$ operators.
Let $P_j = \tilde P_j Q_j$ where $\tilde P_j$ is a product of Pauli $Z$ operators 
on some set of logical qubits (which are {\em not} embedded in a code space!) and $X$ ancilla qubits, 
and $Q_j$ is product of Pauli $Z$ on some set of $Z$ ancilla qubits.
Since the $Z$ ancilla qubits remain in the $\ket 0$ state throughout the protocol,
an equivalent protocol involving only $k+n_X$ total qubits is:
prepare a total of $n_X$ ancilla qubits in the $\ket +$ state,
and apply $\prod_{j=1}^n \exp(i \pi \tilde P_j/8)$, then measure whether all the  $X$ ancilla qubits are still in the $\ket +$ state.
Note that although the product over $j$ ranges from $1$ to $n$, there are only $k+n_X \le n$ physical qubits.

This operator $\exp(i \pi \tilde P_j/8)$ can be applied by a sequence consisting of a Clifford, 
a $T$ gate, and the inverse of the Clifford.
If a subset of $\{ \tilde P_j \}_{j=1}^n$ consists of $n'$ 
(multiplicatively) independent operators, where $n' \le k + n_X$,
then we can apply these $n'$ operators simultaneously by finding a Clifford
that conjugates each of the $n'$ operators to distinct Pauli $Z$ operators.
In the best situation, 
we can obtain a protocol using $k+n_X$ total qubits, 
that requires $\lceil \frac{n}{k+n_X} \rceil$ rounds of Cliffords and $T$-gates.
While the $T$-depth of the circuit is larger than the original protocol, 
the total circuit depth may or may not increase: 
if the Cliffords are implemented by elementary CNOT gates, 
then the circuit depth depends upon the depth required to implement the various encoding and decoding operations.
Other tradeoffs are possible by varying the number of $Z$ ancillas that are kept:
keeping all $Z$ ancillas is the original protocol with minimal depth and maximal space,
while reducing the number will increase depth at the cost of space.

A $Z$ error on a $T$ gate will propagate due to the Cliffords.
Specifically, a Clifford $U^{(j)}$ that maps $\exp(i\pi Z_j / 8)$ to $\exp(i \pi \tilde P_j / 8)$,
will map an error $Z_j$ to $\tilde P_j$, but the error $\tilde P_j$ will not further be affected by
the other $\exp(i \pi \tilde P_j / 8)$ since they commute.
The accumulated error will flip some $X$ ancilla qubits as well as the logical qubits
that would be flipped in the usual protocol.
The association from the errors in $T$ gates to the logical and $X$ ancilla qubits
is identical to the usual protocol.
Hence, in the present space-time tradeoff,
the output error probability and the success probability are
identical to the usual protocol,
whenever the error model is such that only $T$ gates suffer from $Z$ errors.

For example, for the $512$ qubit protocol below to distill $CCZ$ magic states based on $RM(2,9)$,
the number of physical qubits required is $3\times 10+n_X=30+1+9+{9\choose 2}=76$.
For protocols based on a punctured $RM(3,10)$ below, $n_X\leq 1+10+{10 \choose 2}+{10 \choose 3}=176$,
leading in both cases to a large reduction in space required.

\section{Randomized Constructon of Triorthogonal and Generalized Triorthogonal Matrix}
\label{randcon}

We now give a randomized algorithm that either returns a triorthogonal or generalized triorthogonal matrix
with the desired $n,k_{T},k_{CS},k_{CCZ},k_{0}$, or returns failure.
For notational simplicity, we begin with the case of $k_{CS}=k_{CCZ}=0$, i.e., a triorthogonal matrix. 
We then explain at the end how to construct generalized triorthogonal matrices by a straightforward generalization of this algorithm.

\subsection{Randomized Construction of Triorthogonal Matrices}

The matrix is constructed as follows.  
We construct the rows of the matrix iteratively, 
choosing each row uniformly at random subject to constraints given by previous rows.
More precisely, when choosing the $j$-th row of the matrix, 
we choose the row uniformly at random subject to 
(i) the constraint~\eqref{c1} for $b=j$ and for all $a<j$,
(ii) the constraint~\eqref{c2} for $c=j$ and for all $a<b<j$, 
and 
(iii) the constraint that the row has either even or odd weight depending on whether it is one of the first $\kodd$ rows of $G$ or not.
If it is not possible to satisfy all these constraints, 
then we terminate the construction and declare failure.
Otherwise, we continue the algorithm.
If we are able to satisfy the constraints for all rows of $G$,
we return the resulting matrix; in this case, we say that the algorithm ``succeeds.''

Note that all of these constraints that enter into choosing the $j$-th row 
are {\it linear} constraints on the entries of the row. 
Eq.~\eqref{c1} gives $j-1$ constraints 
while Eq.~\eqref{c2} gives $\frac 1 2 (j-1)(j-2)$ constraints (the constraints need not be independent).
We can express these constraints as follows: 
let $\bg_a$ denote the $a$-th row vector of $G$.
Then, let $M_j$ be a $( j-1 +\frac 1 2 (j-1)(j-2) + 1)$-by-$n$ matrix, 
with the first $j-1$ rows of $M_j$ being equal to the first $j-1$ rows of $G$.
The next $\frac 1 2 (j-1)(j-2)$ rows of $M_j$ are vectors $\bg_a \wedge \bg_b$ for $a<b<j$.
The last row of $M_j$ is the all $1$s vector $\onev = [1,1,\ldots,1]$.
Thus, the matrix $M_j$ is determined by $\bg_1,\ldots,\bg_{j-1}$.
The constraints on $\bg_j$ can then be written as
\begin{align}
\label{consodd}
M_j \bg_j = [0,0,\ldots,0,1],
\end{align}
for $1 \leq j \leq \kodd$ and
\begin{align}
\label{conseven}
M_j \bg_j = 0
\end{align}
for $\kodd < j \leq m$.
If $\onev$ is in the span of the first $j-1+\frac 1 2 (j-1)(j-2)$ rows of $M_j$ 
(i.e., all rows but the last of $M_j$),
then the constraints have no solution; otherwise, the constraints have a solution.
Let $\cM_j$ denote the row span of $M_j$; 
then, for $\kodd<j$, the constraint \eqref{conseven} is equivalent to requiring that $\bg_j \in \cM_j^\perp$.

We now analyze the probability that the algorithm succeeds, returning a matrix $G$.
We also analyze the distance of $\mgeven^\perp$.
Our goal is to show a lower bound on the probability that the distance is at least $d$, for some $d$.
The analysis of the distance is based on the first moment method: 
We estimate the probability that a given vector $\bu$ is in $\mgeven^\perp$.
We then sum this probability over all choices of $\bu$ such that $0<|\bu|<d$ and bound the result.

Let $\bu$ be a given vector with $\bu \neq 0$ and $\bu \neq \onev$.
Let us first compute the probability that
$\bu\in \mgeven^\perp$ and $\bu \not \in \cM_m$ conditioned on the algorithm succeeding.
When $\bu \not \in \cM_m$, it holds that $\bu \not \in \cM_j$ for all $j \leq m$.
Hence,
\begin{align}
\Pr[ \bu \cdot \bg_j=0 | \scc \;{\rm and } \;\bu \not \in \cM_j ] = \frac{1}{2},
\end{align}
since $\bu \notin \cM_j$ implies that
the condition $\bu \cdot \bg_j = 0$ is independent of the constraints in \eqref{consodd} or \eqref{conseven}.
Note that success of the algorithm depends only on the choices of the odd weight rows, 
and the even weight rows are chosen after the odd weight rows so that the choice of $\bg_j$ does not affect success.
So,
\begin{align}
\label{probeq}
\Pr[ \bu \in \mgeven^\perp \; {\rm and} \; \bu \not \in \cM_m | \scc]
\leq \prod_{j=\kodd+1}^m \frac{1}{2} = 2^{-\keven}.
\end{align}

Now consider the probability that the algorithm succeeds and $\bu \in \cM_m$. 
As a warm-up, we consider the probability 
that the algorithm succeeds and 
that some vector with small Hamming weight is in $\mG$ (the span of all rows of $G$).
We will use big-O notation from here on, considering the asymptotics of large $n$.
Let $H(x) = -x \log_2 x - (1-x) \log_2(1-x)$ be the binary entropy function.

\begin{lemma}
\label{rowspan}
Consider any fixed $\bu\neq 0$.  Then, the probability that the algorithm succeeds and that $\bu$ is in $\mG$ is
bounded as:
\begin{align}
\label{rowbound}
\Pr\left[ \scc \; {\rm and} \; \bu \in  \mG \right] \leq
\sum_{k=1}^m 2^{-n+k+k (k-1)/2}.
\end{align}
Further, let $\theta \in (0,\sqrt 2)$ be a constant,
and let $c \in (0,\frac 1 2)$ be such that $H(c) = 1 - \frac 1 2 \theta^2$. 
Then, for $m \leq \theta \sqrt n$,
we have
\begin{align}
\label{mosn}
\Pr \left[
\scc \; {\rm and} \; 
\exists \bv  \in \mG \setminus \{0\} \; {\rm s.t.} 
\Bigl( |\bv| \leq cn - o(n) \; {\rm or} \; |\bv| \geq (1-c) n + o(n) \Bigr) 
\right] = o(1).
\end{align}
(This equation means that for some function $f(n)$ which is $o(n)$ 
the probability that there exists a nonzero $\bv \in \mG$ with $|\bv| \leq cn-f(n)$ or $|\bv|\geq (1-c)n+f(n)$ is $o(1)$).
\end{lemma}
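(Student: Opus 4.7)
My plan is a two-part argument: first I would establish the per-vector estimate (\ref{rowbound}) by a step-by-step analysis of the randomized construction, then I would apply (\ref{rowbound}) in a union bound over all low- and high-weight vectors, combined with a standard binary-entropy count of Hamming balls, to derive (\ref{mosn}).

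For (\ref{rowbound}), the key per-row observation is that conditional on $\bg_1,\ldots,\bg_{j-1}$ and on the algorithm having not yet failed, $\bg_j$ is uniformly distributed on an affine subspace (a coset of $\cM_j^\perp$) of $\FF_2^n$ of dimension at least $n-\dim\cM_j$; since $M_j$ has only $(j-1)+(j-1)(j-2)/2+1$ rows, this dimension is at least $n-j-(j-1)(j-2)/2$. Hence for any fixed target $\bw$, the probability that $\bg_j=\bw$ and step $j$ does not fail is at most $2^{-n+j+(j-1)(j-2)/2}$. I would then decompose $\{\bu\in\mG\}$ as the union over nonempty $S\subseteq\{1,\ldots,m\}$ of the events $\{\sum_{i\in S}\bg_i=\bu\}$. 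For each $S$ with $j=\max S$, conditioning on the first $j-1$ rows forces $\bg_j$ to equal a specific vector, so that event has probability at most $2^{-n+j+(j-1)(j-2)/2}$. Grouping by $j$, using $|\{S:\max S=j\}|=2^{j-1}$, and summing gives $\sum_{j=1}^m 2^{j-1}\cdot 2^{-n+j+(j-1)(j-2)/2}=\sum_{j=1}^m 2^{-n+j(j+1)/2}$, which matches (\ref{rowbound}) after rewriting $j(j+1)/2=k+k(k-1)/2$.

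For (\ref{mosn}), I would apply (\ref{rowbound}) in a union bound over all nonzero vectors of weight at most $cn-f(n)$ or at least $(1-c)n+f(n)$, where $f(n)=o(n)$ is to be chosen. The count of such vectors is at most $2\sum_{w\le cn-f(n)}\binom{n}{w}\le O(n)\cdot 2^{nH(c-f(n)/n)}$ by the standard entropy bound for partial binomial sums, and a first-order Taylor expansion gives $nH(c-f(n)/n)=nH(c)-H'(c)f(n)+O(f(n)^2/n)$. Meanwhile, since $m\le\theta\sqrt n$, the per-vector probability is controlled by its largest term: $\sum_{k=1}^m 2^{-n+k(k+1)/2}\le O(\sqrt n)\cdot 2^{-n+\theta^2 n/2+O(\sqrt n)}$. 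Multiplying these two estimates and invoking the hypothesis $H(c)=1-\theta^2/2$ to cancel the $2^{\Theta(n)}$ contributions leaves $O(n^{3/2})\cdot 2^{-H'(c)f(n)+O(\sqrt n)+O(f(n)^2/n)}$, which is $o(1)$ provided $f(n)=\omega(\sqrt n)$ and $f(n)=o(n)$ (for instance $f(n)=n^{2/3}$; note that $H'(c)>0$ since $c<1/2$).

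The main obstacle is purely bookkeeping: one has to verify that $k(k+1)/2$ is maximized at $k=m$ on $\{1,\ldots,m\}$ so that the geometric sum is controlled by its last term, and to track the $o(n)$ corrections carefully enough that the final exponent tends to $-\infty$ fast enough to absorb the polynomial union-bound overhead. A minor point worth noting is that the case $\bu=\onev$ needs no separate treatment: (\ref{rowbound}) applies to every nonzero $\bu$ and, under the hypothesis $\theta<\sqrt 2$, already implies $\Pr[\scc,\,\onev\in\mG]=o(1)$, which is why the lemma can speak of $\mG\setminus\{0\}$ without excluding $\onev$ even though $|\onev|=n\ge(1-c)n+f(n)$.
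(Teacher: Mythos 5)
Your proof is correct and follows essentially the same two-step argument as the paper's: for \eqref{rowbound}, decompose $\{\bu\in\mG\}$ according to the largest index carrying a nonzero coefficient and bound each conditional probability by the codimension of the constraint space (the rank of $M_j$); for \eqref{mosn}, a first-moment union bound over vectors of weight below $cn-f(n)$ or above $(1-c)n+f(n)$, using the entropy threshold $H(c)=1-\tfrac{1}{2}\theta^2$. The one place where you go slightly beyond the paper is in making precise that $f(n)$ must dominate the $O(\sqrt n)$ subexponential corrections (so $f(n)=\omega(\sqrt n)$ and $o(n)$, e.g.\ $n^{2/3}$), whereas the paper merely asserts that some positive $o(n)$ function suffices; your version is the more careful one.
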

\begin{proof}
Suppose $\bu$ is in $\mG$.  
Then, $\bu=\sum_{i=1}^m b_i \bg_i$ for some coefficients $b_i \in \FF_2$.
We consider each of the $2^m-1$ possible nonzero choices of the vector $b$
and bound the probability that, for the given choice of $b$,
$\bu=\sum_{i=1}^m b_i \bg_i$ for $\bg$ chosen by the algorithm.
For a given choice of nonzero $b$, 
let $k$ be the largest $i$ such that $b_i \neq 0$.  
The vector $\bg_k$ is chosen randomly subject to $\frac 1 2 k (k-1) + 1$ constraints.
Hence, for given $\bg_1,\ldots,\bg_{k-1}$ and given $b$, $\bu$, the probability that $\bg_k=\bu+\sum_{i=1}^{m-1} b_i \bg_i$ is bounded by $2^{-n+k(k-1)/2+1}$.
There are $2^{k-1}$ possible choices of $b_1,\ldots,b_{k-1}$.  Summing over these choices and summing over $k$,
Eq.~(\ref{rowbound}) follows.

By a first moment bound, the probability that there is a nonzero vector of weight at most $w$ in $\mG$ is bounded by
\[
\Bigl( \sum_{j=1}^w {n \choose j}\Bigr) \Bigl( \sum_{k=1}^m 2^{-n+k+k(k-1)/2}\Bigr).
\]
Similarly, the probability that there is a vector with weight at least $n-w$ in $\mG$ is 
bounded by
\[
\Bigl( \sum_{j=0}^w {n \choose j}\Bigr) \Bigl( \sum_{k=1}^m 2^{-n+k+k(k-1)/2}\Bigr).
\]
For $m \le \theta \sqrt{n}$ with $\theta<\sqrt{2}$, the exponent reads $-n + m +m (m-1)/2 \le -(1-\theta^2/2)n+o(n)$.
The number of vectors with weight at most $cn$ or at least $(1-c)n$, is $2^{H(c)n+o(n)}$.
By choosing $w = cn$ such that $H(c)-(1-\frac 1 2 \theta^2)=0$, the first moment bound gives a result which is $2^{o(n)}$.
We can instead choose $w = cn - f(n) = ( c - o(1))n$, where $f(n)=o(n)$ is some positive function,
so that $H(c - o(1)) - (1 -\frac 1 2 \theta^2) < 0$
and the first moment bound gives a result which is $o(1)$.
\end{proof}

\begin{lemma}
\label{wedgebound}
Let $\theta$ and $c$ be chosen as in Lemma~\ref{rowspan},
and suppose $m \leq  \theta \sqrt{n}$.
Let $0 < \rho < \frac 1 2$ be a constant.
Then, the probability that the algorithm succeeds and that the (classical) minimum distance
of the subspace $\cM_m$ is smaller than $\rho n$ is at most
\[
2^{H(\rho)n-cn+\theta^2 n+o(n)}  + o(1).
\]
For sufficiently small $\theta > 0$, 
there are $\rho, c > 0$ such that this expression tends to zero for large $n$.
\end{lemma}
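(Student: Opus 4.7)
The plan is to apply the first moment method. Since the number of nonzero vectors $\bu$ with $|\bu|\leq \rho n$ is $2^{H(\rho)n + o(n)}$, it suffices to show
\[
\Pr[\scc \text{ and } \bu \in \cM_m] \leq 2^{-cn + \theta^2 n + o(n)} + o(1)
\]
for each fixed nonzero $\bu$ with $|\bu|\leq \rho n$, after which a union bound over $\bu$ yields the claim. The subspace $\cM_m$ is spanned by $\onev$, $\bg_1,\ldots,\bg_{m-1}$, and the wedges $\bg_a\wedge \bg_b$ with $a<b<m$, so the event $\bu\in \cM_m$ decomposes as a union over at most $2^{1+(m-1)+\binom{m-1}{2}}\leq 2^{\theta^2 n/2+o(n)}$ coefficient tuples.

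For a fixed nontrivial tuple $(\alpha,\{b_i\},\{c_{ab}\})$, let $k$ be the largest index appearing with a nonzero coefficient. Isolating the terms containing $\bg_k$ rewrites the tuple equation as
\[
(b_k\onev + \bw)\wedge \bg_k = \bu', \qquad \bw := \sum_{a<k}c_{ak}\bg_a,
\]
where $\bu'$ depends only on the tuple, $\bu$, and $\bg_1,\ldots,\bg_{k-1}$. Read coordinatewise, this pins $\bg_k$ to a specified value on the set $S := \{i : b_k + \bw_i = 1\}$ of size $s = |b_k\onev + \bw|$. Condition on the good event $E$ of Lemma~\ref{rowspan}, which holds with probability $1-o(1)$ and forces every nonzero vector in $\mG$ to have weight in $[cn-o(n),(1-c)n+o(n)]$. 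Then $s\geq cn - o(n)$: this is immediate when all $c_{ak}=0$ and $b_k=1$ (giving $s=n$), while otherwise $\bw$ is a nonzero element of the span of $\bg_1,\ldots,\bg_{k-1}$ and hence has weight in the stated interval. The degenerate possibility $\bw=0$ with some $c_{ak}\neq 0$ corresponds to a linear dependence among earlier rows, in which case the tuple reduces to one with a smaller effective maximal index and is already accounted for elsewhere in the union bound.

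Conditional on $\bg_1,\ldots,\bg_{k-1}$, the row $\bg_k$ is uniform over an affine space of codimension $\dim \cM_k \leq \theta^2 n/2 + o(n)$, so the probability that $\bg_k$ agrees with $\bu'$ on the $s$ positions of $S$ is at most $2^{\dim\cM_k - s}\leq 2^{-cn + \theta^2 n/2 + o(n)}$. Multiplying by the tuple count and adding the $o(1)$ probability of $E^c$ yields $\Pr[\scc \text{ and } \bu\in \cM_m] \leq 2^{-cn+\theta^2 n + o(n)} + o(1)$. A union bound over $\bu$ then gives the claimed bound $2^{H(\rho)n - cn + \theta^2 n + o(n)} + o(1)$. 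For the final assertion, note that $H(c)=1-\theta^2/2$ implies $c\to 1/2$ as $\theta\to 0$; hence for sufficiently small $\theta$ one can choose $\rho>0$ with $H(\rho)<c-\theta^2$, making the bound $o(1)$.

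The main technical obstacle is the coupling of $\bg_k$ with the wedge factor $\bw$. Unlike Lemma~\ref{rowspan}, where the top row was fully determined by the target equation, here it is only pinned on the support of $b_k\onev+\bw$, and establishing the lower bound $s\geq cn-o(n)$ requires feeding the earlier Hamming weight estimate of Lemma~\ref{rowspan} back into the argument. This is precisely what introduces the extra $\theta^2 n$ term relative to the $-cn$ contribution.
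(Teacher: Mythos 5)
Your argument mirrors the paper's proof: you fix a nonzero coefficient tuple expressing $\bu$ in terms of $\onev$, the rows $\bg_i$, and the wedges $\bg_a\wedge\bg_b$, isolate the highest row index $k$, invoke the good-distance event from Lemma~\ref{rowspan} to conclude the isolated wedge factor has weight $\ge cn-o(n)$, and then bound the conditional probability that $\bg_k$ satisfies the resulting affine constraint before summing over tuples and over $\bu$. The minor differences---you absorb the $\onev$ coefficient into $\bu'$ rather than splitting into the cases $a=0,1$, and you explicitly flag the degenerate $\bw=0$ possibility where the paper is silent---are cosmetic, not a different route.
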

\begin{proof}
We say that $\mG$ has {\it good distance} if all nonzero vectors $\bu$ in $\mG$ have $cn - o(n) \leq |\bu| \leq (1-c)n + o(n)$,
where $o(n)$ term is from Lemma~\ref{rowspan}.
By Eq.~\eqref{mosn}, the probability that the algorithm succeeds and that $\mG$ does {\it not} have good distance is $o(1)$.

Let $\bu\neq 0,\onev$.
We now bound the probability that the algorithm succeeds and that $\mG$ has good distance and that $\bu \in \cM_m$.

If $\bu \in \cM_m$, then
for some $m$-by-$m$ binary upper triangular matrix $A_{ij}$ and for some $a \in \FF_2$, we have
\begin{align}
\label{uis}
\bu=\sum_{i,j \;{\rm s.t.} \; i \leq j} A_{ij} \bg_i \wedge \bg_j + a \onev.
\end{align}
We consider each of the $2^{m(m-1)/2}-1$ possible nonzero choices of the matrix $A$ and each of the two choices of $a$,
and bound the probability that Eq.~(\ref{uis}) holds for the given choice.

Suppose $a=0$ (the case $a=1$ follows from this case by considering the vector $\bu+\onev$).
For a given choice of $A$, let $k$ be the largest such that $A_{ik} \neq 0$ for some $i \leq k$.
Let $\bg_1,\ldots,\bg_{k-1}$ be given; we compute the probability that $\bg_k$ is such that Eq.~(\ref{uis}) holds.
Eq.~\eqref{uis} imposes an inhomogeneous linear constraint on $\bg_k$ as
\begin{align}
    \label{vwidentity}
    \bv = \bw \wedge \bg_k
\end{align}
where
\begin{align*}
\bv &= \bu + \sum_{i,j\; {\rm s.t.} i \leq j < k} A_{ij} \bg_i \wedge \bg_j,\\
\bw &= \sum_{i \; {\rm s.t.} \; i < k} A_{ik} \bg_i + A_{kk} \onev.
\end{align*}
Assuming $\mG$ has good distance, we have $ |\bw| \ge cn - o(n) $.
Then, the linear contraint Eq.~\eqref{vwidentity} has rank at least $cn - o(n)$; 
in fact, it fixes at least $cn-o(n)$ components of $\bg_k$.
The vector $\bg_k$ is chosen randomly subject to $\frac 1 2 k (k-1) + 1$ linear constraints.
Hence, the probability that Eq.~(\ref{vwidentity}) holds is at most
\[
2^{-cn + k(k-1)/2 + o(n)}.
\]
Summing over all choices of $A_{ij}$, the probability 
that the algorithm succeeds and that $\mG$ has good distance and that $\bu \in \cM_m$ is bounded by
\[
2^{-cn + m(m-1) + o(n)}.
\]
The number of vectors $\bu$ with $|\bu| \leq \rho n$ is (for $\rho\leq \frac 1 2$)
\begin{align}
\sum_{1 \leq j \leq \rho n} {n \choose j} = 2^{H(\rho) n+o(n)}.
\end{align}
Hence, by a first moment argument, the probability that the algorithm succeeds and that $\mG$ has good distance
and that $\cM_m$ has distance smaller than $\rho n$ for $\rho\leq 1/2$ is
\[
2^{H(\rho)n-cn+m(m-1)+o(n)}.
\]
We know $c \to \frac 1 2$ as $\theta \to 0$.
For small enough $\theta$ we have $-cn+m(m-1)=-\Omega(n)$.
Hence this probability is $o(1)$ for sufficiently small $\rho$.
\end{proof}

Finally,
\begin{lemma}
\label{succprob}
Let $m \leq  \theta \sqrt{n}$.  
Then, for sufficiently small $\theta > 0$, the algorithm succeeds with probability $1-o(1)$.
\end{lemma}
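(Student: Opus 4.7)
The plan is to observe that the only way the algorithm can fail is at one of the first $\kodd$ rows: for an even-weight row the triorthogonality constraints \eqref{conseven} are homogeneous and admit $\bg_j = 0$ as a solution, so failure is impossible. For an odd-weight row, failure at step $j \le \kodd$ happens precisely when \eqref{consodd} has no solution, i.e., when $\onev \in \tilde\cM_j := \spn(\bg_1,\ldots,\bg_{j-1},\{\bg_a \wedge \bg_b : a<b<j\})$. Since the spaces $\tilde\cM_j$ are nested in $j$, I would reduce the claim to showing $\Pr[\onev \in \tilde\cM_m] = o(1)$.

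To bound this probability I would reuse the strategy of Lemma~\ref{wedgebound}, applied to the single target vector $\bu = \onev$. Suppose $\onev = \sum_{i \le j} A_{ij}\,\bg_i \wedge \bg_j$ for some nonzero upper triangular $A$; no extra $a\onev$ term appears because $\onev$ itself is not a generator of $\tilde\cM_m$. Let $k$ be the largest column index appearing in $A$; then $\bg_k$ satisfies a relation of the form $\bv = \bw \wedge \bg_k$ with $\bw = \sum_{i<k} A_{ik}\bg_i + A_{kk}\onev$, exactly as in the derivation of \eqref{vwidentity}. On the event that $\mG_{k-1}$ has good distance and that $\onev \notin \mG_{k-1}$, which holds with probability $1 - o(1)$ by applying Lemma~\ref{rowspan} both generically and to the specific vector $\onev$, I would argue that $\bw$ is nonzero and has weight at least $cn - o(n)$, distinguishing the case $A_{kk} = 0$ (immediate from good distance) from $A_{kk} = 1$ (using $|\onev + \bg'| = n - |\bg'| \ge cn - o(n)$ for $\bg' \in \mG_{k-1}$).

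With $|\bw| \ge cn - o(n)$, the rank argument of Lemma~\ref{wedgebound} yields a per-$A$ probability bound of $2^{-cn + k(k-1)/2 + o(n)}$. Summing over the at most $2^{k(k+1)/2}$ coefficient matrices with largest column $k$ and then over $k \le m \le \theta\sqrt n$ produces a total bound of order $2^{(\theta^2 - c)n + o(n)}$, which is $o(1)$ for sufficiently small $\theta$ since $c \to 1/2$ as $\theta \to 0$.

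The main subtlety will be handling the conditioning cleanly: Lemmas~\ref{rowspan} and~\ref{wedgebound} are stated as bounds on events intersected with ``success'', whereas here the quantity to control is ``failure'' itself. I plan to sidestep this by extending the algorithm so that whenever an odd-weight step is infeasible, it samples $\bg_j$ uniformly subject only to the triorthogonality constraints and continues; all $\bg_j$ are then defined on a common probability space, the original failure event becomes $\{\exists j \le \kodd : \onev \in \tilde\cM_j\}$, and the first-moment estimates above apply without further conditioning.
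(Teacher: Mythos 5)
Your proposal is correct and follows essentially the same route as the paper: identify failure at step $j \le \kodd$ with $\onev \in \tilde\cM_j$, observe that the $a\onev$ term of Lemma~\ref{wedgebound} is absent since $\onev$ is not a generator here, and rerun the first-moment/rank argument of Lemma~\ref{wedgebound} for the single fixed vector $\onev$. The paper is terser — it simply says ``use the same proof as Lemma~\ref{wedgebound} with Eq.~\eqref{uis} replaced and no sum over $\bu$'' — whereas you additionally spell out the $A_{kk}=0$ versus $A_{kk}=1$ cases (already implicit in Lemma~\ref{wedgebound}) and introduce an explicit coupling to an always-continuing algorithm to make the probability-space bookkeeping clean; this is a reasonable sharpening of a point the paper glosses over.
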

\begin{proof}
Suppose the algorithm fails on step $k \le k_T$.
(The algorithm never declares failure after $k_T$ steps 
as the constraints become a homogeneous linear equation.)
Then, the first $k-1$ steps of the algorithm succeed and
the vector $\onev$ must be in the span of $\{ \bg_i \wedge \bg_j : i \le j < k \}$.
The probability that this happens is $o(1)$, 
as we can see using the same proof as in Lemma~\ref{wedgebound}.
There is one minor modification to the proof: Eq.~\eqref{uis} should be replaced by
\begin{align}
\label{uis2}
\onev=\sum_{i,j \;{\rm s.t.} \; i \leq j < k} A_{ij} \bg_i \wedge \bg_j.
\end{align} 
Also, there is no need to sum over vectors $\bu$ as instead we are considering the probability that a fixed vector is in the span of $\{ \bg_i \wedge \bg_j : i \le j < k \}$.
Otherwise, the proof is the same.
\end{proof}

Hence,
\begin{theorem}
We can choose $m=\Theta(\sqrt{n})$ and $\keven=\Theta(\sqrt{n})$, 
so that $\kodd=m-\keven = \Theta(\sqrt n)$
and with high probability the algorithm succeeds and the triorthogonal matrix $G$ has distance 
\[
d = \Omega\Bigl(\frac{\sqrt{n}}{\log n}\Bigr).
\]
\end{theorem}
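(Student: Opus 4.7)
The plan is to combine the three preceding lemmas via a union bound, after fixing parameters. First I would pick $\theta>0$ small enough that both Lemma~\ref{succprob} and Lemma~\ref{wedgebound} yield $o(1)$ failure probabilities, and set $m = \lfloor \theta\sqrt{n}\rfloor$. Within this $m$, I would then split $m = \kodd + \keven$ with both $\kodd$ and $\keven$ of order $\sqrt{n}$ (for concreteness $\keven = \lfloor m/2\rfloor$), which gives $\keven = \Theta(\sqrt{n})$ as required. Lemma~\ref{succprob} immediately yields success with probability $1-o(1)$.

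Next I would dichotomize any candidate nonzero $\bu \in \mgeven^\perp \setminus \mG^\perp$ of weight at most $d$ according to whether $\bu \in \cM_m$ or $\bu \notin \cM_m$. For the first case, Lemma~\ref{wedgebound} (applied with a fixed $\rho>0$ chosen small enough for the lemma's conclusion) guarantees, with probability $1-o(1)$ conditional on success, that every nonzero element of $\cM_m$ has weight at least $\rho n$. Since $\rho n \gg \sqrt{n}/\log n$, this case cannot produce a low-weight logical operator. The vector $\onev$, the only distinguished element of $\cM_m$, also has weight $n$, so it likewise poses no problem.

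For the second case, $\bu \notin \cM_m$, I would invoke the bound \eqref{probeq}, namely $\Pr[\bu \in \mgeven^\perp, \bu \notin \cM_m \mid \scc] \leq 2^{-\keven}$, and apply it in a first-moment argument over all nonzero $\bu$ of weight at most $d$:
\begin{align}
\Pr\Bigl[\exists\, \bu \neq 0,\ |\bu|\le d,\ \bu\in\mgeven^\perp,\ \bu\notin\cM_m \,\Big|\, \scc\Bigr]
\;\le\; \sum_{j=1}^{d} \binom{n}{j} \cdot 2^{-\keven}
\;\le\; 2^{\,d\log_2 n + O(\log n) - \keven}.
\end{align}
With $\keven = \Theta(\sqrt{n})$, choosing $d = c\sqrt{n}/\log n$ with the constant $c$ strictly smaller than the constant in $\keven$ (divided by $\log 2$) makes the exponent negative and linear in $\sqrt{n}/\log n$, so this probability is $o(1)$.

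A union bound over the three $o(1)$ events—failure of the algorithm, $\cM_m$ having distance below $\rho n$, and existence of a short vector in $\mgeven^\perp \setminus \cM_m$—then establishes the theorem. I expect the main obstacle to be merely verifying that the constants line up: the same $\theta$ must work simultaneously for Lemmas~\ref{succprob} and \ref{wedgebound}, and $\keven$ must be chosen large enough (relative to $c$) to beat the $d\log_2 n$ term, but small enough that $m = \kodd+\keven \leq \theta\sqrt{n}$. All of these constraints are affine in $\theta$ and can be satisfied by taking $\theta$ sufficiently small and then $\keven = m/2$; the asymptotic bookkeeping is routine once this choice is made.
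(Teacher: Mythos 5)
Your proof is correct and follows essentially the same route as the paper: fix $\theta$ small enough that Lemmas~\ref{succprob} and~\ref{wedgebound} both apply, take $m=\lfloor\theta\sqrt n\rfloor$ with $\keven=\Theta(\sqrt n)$, dismiss candidates in $\cM_m$ via the linear distance of $\cM_m$, and handle candidates outside $\cM_m$ with a first-moment bound built on Eq.~\eqref{probeq}, after which a union bound finishes. The only (inconsequential) discrepancy is a stray constant in your final inequality (the threshold for $c$ should be the constant in $\keven$ \emph{times}, not divided by, $\log 2$, and your count $2^{d\log_2 n}$ is off by a harmless factor of $2$ in the exponent from the tighter $2^{d\log_2 n/2}$ the paper uses), but since you only need some sufficiently small $c>0$, this does not affect the argument.
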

\begin{proof}
By Lemma~\ref{succprob}, the algorithm succeeds with high probability for sufficiently small $\theta>0$.
By Lemma~\ref{wedgebound}, for sufficiently small $\theta>0$, for $m=\lfloor \theta \sqrt n \rfloor$,
the distance of $\cM_m$ is $\Theta(n)$ with high probability. 
Now we condition on the event that the algorithm succeeds and $\cM_m$ has linear distance.

The distance of the triorthogonal matrix $G$ can be bounded by a first moment bound.
Since $\cM_m$ has linear distance,
the event that $\bu \in \cM_m$ for any nonzero $\bu$ of weight $o(n)$ does not happen.
Then, we can apply Eq.~\eqref{probeq} using the fact that
for any constant $C$, the number of vectors with weight at most 
$C {\sqrt{n}}/{\log{n}}$ is $2^{C\sqrt{n}/2 +o(1)}$. 
So, for sufficiently small $C$ the first moment bound implies that 
the probability that there is $\bu \in \mG_0^\perp$ of weight $\le C \sqrt n / \log n$ is $o(1)$.
\end{proof}

Now that in this regime, the distillation efficiency~\cite{bh} 
defined as $\gamma=\log(n/\kodd)/\log(d)$ converges to $1$ as $n \to \infty$.

\subsection{Randomized Construction of Generalized Triorthogonal Matrices}
The randomized construction of triorthogonal matrices above 
immediately generalizes to a randomized construction of generalized triorthogonal matrices.  
In the previous randomized construction, 
each vector $\bg_j$ was chosen at random subject to certain linear constraints.
Note that Eqs.~(\ref{conseven},\ref{consodd}) have the same left-hand side but different right-hand side.
These constraints were homogeneous for row vectors in $\geven$ 
(i.e., Eq.~(\ref{conseven}) has the zero vector on the right-hand side) 
and inhomogeneous for row vectors in $\godd$ 
(i.e., Eq.~(\ref{consodd}) has one nonzero entry on the right-hand side).
For a generalized triorthogonal matrix, we follow the same randomized algorithm as before 
except that we modify the constraints on the vectors $\bg_j$.
The vectors will still be subject to linear constraints  that $M_j \bg_j$ is equal to some fixed vector, with the same $M_j$ as before.  
However, the fixed vector is changed in the generalized algorithm to obey the definition of a generalized triorthogonal matrix.
This modifies the success probability of the algorithm, 
but one may verify that the algorithm continues to succeed with high probability in the regime considered before.

\section{Reed-Muller Code Based Distillation}
\label{rmcbd}

\subsection{Review of classical Reed-Muller codes}

The space of $\FF_2$-valued functions over $m$ binary variables $x_1, \ldots, x_m$
is a vector space of dimension $2^m$,
and every such function can be identified with a polynomial in $x_1,\ldots,x_m$.
We choose a bijection $\{ f: \FF_2^m \to \FF_2 \} = \FF_2^{2^m}$
defined by
\begin{align}
   \text{function } f : \FF_2^m \to \FF_2 \Longleftrightarrow \text{codeword }( f(z) )_{z \in \FF_2^m }
\end{align}
where the right-hand side is the list of function values.
In this bijection, the ordering of elements of $\FF_2^m$ is implicit,
but a different ordering is nothing but a different ordering of bits, 
and hence as a block-code it is immaterial. 
For example, the degree zero polynomial $f(x_1,\ldots,x_m) = 1$ is a constant function,
that corresponds to all-1 vector of length $2^m$,
and a degree 1 polynomial $f(x_1,\ldots,x_m) = x_1$ is a function
that corresponds to a vector of length $2^m$ and weight $2^{m-1}$.
Since the variables $x_i$ are binary, we have $x_i^2 = x_i$,
and every polynomial function is a unique sum of monomials 
where each variable has exponent 0 or 1.

For an integer $r \ge 0$
the Reed-Muller code $RM(r,m) \subseteq \FF_2^{2^m}$
is defined to be the set of all polynomials (modulo the ideal $(x_1^2-x_1,x_2^2-x_2,\ldots)$)
of degree at most $r$,
expressed as the lists of function values.
\begin{align}
    RM(r,m) = \{ ( f(x) )_{x \in \FF_2^m } ~|~ f \in \FF_2[x_1,\ldots,x_m] / (x_i^2-x_i), \quad \deg f \le r \}
\end{align}
By definition, $RM(r,m) \subseteq RM(r+1,m)$.
For example, $RM(0,m)$ is the repetition code of length $2^m$.
A basis of $RM(r,m)$ consists of monomials
that are products of at most $r$ distinct variables.
Hence, the number of encoded (classical) bits in $RM(r,m)$ 
is equal to $\sum_{j=0}^r \binom{m}{j}$.
The code distance of $RM(r,m)$ is $2^{m-r}$, 
which can be proved by induction in $m$.

A property we make routine use of is that
whenever a polynomial does not contain $x_1 \cdots x_m$ (the product of all variables),
the corresponding vector of length $2^m$ has even weight.
This allows us to see that the dual of $RM(r,m)$ is again
a Reed-Muller code, and direct dimension counting shows that 
\begin{align}
RM(r,m)^\perp = RM(m-r-1,m).
\end{align}
For Reed-Muller codes it is easy to consider the {\it wedge} product of two codes,
which appears naturally in the triorthogonality.
Namely, given two binary subspaces $V$ and $W$,
we define the wedge product as
\begin{align}
    (v \wedge w)_i &= v_i w_i \text{ where } v,w \in \FF_2^n,\\
    V \wedge W &= \spn_{\FF_2} \{ v \wedge w : v \in V, w \in W \}. \label{eq:wedge-def}
\end{align}
By definition, $V^{\wedge 2} := V \wedge V \supseteq V$.
Since a code word of a Reed-Muller code is a list of function values,
we see that
\begin{align}
    RM(r,m)\wedge RM(r',m) = RM(r+r',m).
\end{align}
It follows that $RM(r,m)$ is triorthogonal subspace if $3r < m$.
(In fact, it is triply even.)

Since a basis of Reed-Muller codes consists of monomials
where each variable has exponent 0 or 1,
it is often convenient to think of a monomial as a binary $m$-tuple,
that specifies which variable is a factor of the monomial.
For example, if $m=3$,
the constant function $f=1$ can be represented as $(0,0,0)$,
the function $f=x_1$ can be represented as $(1,0,0)$,
and the function $f=x_2x_3$ can be represented as $(0,1,1)$.
This $m$-tuple is called an {\bf indicator vector}.
(In contrast to what the name suggests,
the ``sum'' of indicator vectors is not defined.)
An indicator vector $a$ that defines a monomial
corresponds to a code word $\mI_a \in \FF_2^{2^m}$.
Under the wedge product of two code words,
the corresponding two monomials is multiplied.
In terms of indicator vector, 
this amounts to taking bit-wise OR operation
which we denote by $\vee$:
\begin{align}
    \mI_a \wedge \mI_b = \mI_{a \vee b}.
\end{align}
For example, if $m=3$,
\begin{align*}
    a &= (1,0,1) &\leftrightarrow f &= x_1x_3& \leftrightarrow \mI_a &= [00000101]\\
    b &= (1,1,0) &\leftrightarrow f &= x_1x_2& \leftrightarrow \mI_b &= [00000011]\\
    a\vee b &= (1,1,1)  &\leftrightarrow f &= x_1 x_2 x_3& \leftrightarrow \mI_{a \vee b} &= [00000001]
\end{align*}

\subsection{Triorthogonal codes for CCZ}

In \cite{eastin2013distilling,jones2013low},
a construction was presented to distill a single Toffoli gate from $8$ $T$ gates,
so that any single error in the $T$ gates is detected.
More quantitatively, if the input $T$ gates have error probability $\ein$,
the output Toffoli has error probability $\eout = 28 \ein^2 + O(\ein^3)$.
A protocol of similar performance based on a generalized triorthogonal matrix
was presented in \cite{campbell2017unified}.
In this subsection, we present alternatives to these constructions
that builds upon Reed-Muller codes, yielding higher order error suppression.
The protocol of \cite{campbell2017unified} will be the same as our smallest instance.

Let $m$ be a multiple of 3.
We consider $RM(r = m/3 -1,m)$ to 
build a generalized triorthogonal code on $2^m$ qubits,
with $\kodd = k_{CS} = 0$ but $k_{CCZ} > 0$.
Since $3r = m-3 < m$, the generating matrix of $RM(m/3-1,m)$ qualifies to be $\geven$.
The $Z$-distance of the triorthogonal code is 
at least the distance of $RM(m/3-1,m)^\perp = RM(2m/3,m)$,
which is $2^{m/3}$. (In fact, it is exactly this for the following constructions.)

We choose triples of $G_{CCZ}$ specified by triples of indicator vectors $a^{(i)},b^{(i)},c^{(i)}$.
The triorthogonality conditions can be summarized as follows.
\begin{align}
    |a^{(i)}| & \le m/3 + 1, \nonumber\\
    |b^{(i)}| & \le m/3 + 1, \nonumber\\
    |c^{(i)}| & \le m/3 + 1, \nonumber\\
    |a^{(i)} \vee b^{(j)}| &\le 2m/3, \\
    |b^{(i)} \vee c^{(j)}| &\le 2m/3, \nonumber\\
    |c^{(i)} \vee a^{(j)}| &\le 2m/3, \nonumber\\
    |a^{(i)} \vee b^{(j)} \vee c^{(\ell)} | & \begin{cases} = m &\text{if } i = j = \ell, \\ < m &\text{otherwise.} \end{cases} \nonumber
\end{align}
(A similar set of conditions for $G_{CS}$ should be straightforward.)
We choose $a^{(i)},b^{(i)},c^{(i)}$ to have weight exactly $m/3$,
so that the first six conditions above are automatically satisfied.

We will give three constructions of triples obeying these requirements.  
One construction will be analytic, one will be numerical, 
and one will be a randomized construction using the Lovasz local lemma.
It may be useful for the reader to think of a vector $a_i \in \mF_2^m$ 
as corresponding to a subset $A_i$ of some set $S$ with $|S|=m$.  
Then, a triple consists of three disjoint subsets $A_i,B_i,C_i$ of cardinality $m/3$ each.

The analytic construction is as follows:
\begin{align}
    a^{(u)} = (u, \bar u, 0),\quad b^{(u)} = (0, u, \bar u) ,\quad c^{(u)} = (\bar u, 0, u)
\end{align}
where we labeled a triple by $u \in \FF_2^{m/3} \setminus \{ 0, \onev \}$.
So, we have $2^{m/3}-2$ triples.
Here, $(u,\bar u,0)$ denotes the indicator vector of length $m$ 
formed by concatenating three bit strings of length $m/3$,
and $\bar u$ is the complement of $u$ so that $\bar u_i=1-u_i$.
By construction, one can verify that 
$a^{(u)} \vee b^{(u)} \vee c^{(u)} =\onev$ for any $u \in \FF_2^{m/3}$.
The case $u=0$ and $u=\onev$ are excluded, 
for the triple to satisfy the other generalized triorthogonality conditions.
Suppose that $x,y,z$ are rows of $G_{CCZ}$ and are not all from the same triple.
We need to check that $|x\vee y \vee z| < m$.
This condition is violated only if
$x=(u_x,\bar u_x,0)$ and $y=(0,u_y, \bar u_y)$ and $z=(\bar u_z,0,u_z)$ 
for some $u_x,u_y,u_z$ because there is no way to have 
$\onev = 0 \vee 0 \vee u \in \FF_2^{m/3}$
unless $u = \onev$, which case we have excluded.
But then, we must have that $u_x=u_y=u_z$ to have $|x \vee y \vee z | = m$.

In the particular case $m=3$, this construction gives $k_{CCZ}=0$.
However, we can instead have $k_{CCZ}=1$ with the triple of indicator vectors $(1,0,0),(0,1,0),(0,0,1)$,
corresponding to polynomials $x_1, x_2, x_3$.
The full generalized triorthogonal matrix is
\begin{align}
\underbrace{
\begin{pmatrix}
x_1\\
x_2\\
x_3\\
\hline
1
\end{pmatrix}
}_\text{polynomial notation} = 
\underbrace{    
\begin{bmatrix}
 0 & 0 & 0 & 0 & 1 & 1 & 1 & 1 \\
 0 & 0 & 1 & 1 & 0 & 0 & 1 & 1 \\
 0 & 1 & 0 & 1 & 0 & 1 & 0 & 1 \\
 \hline
 1 & 1 & 1 & 1 & 1 & 1 & 1 & 1 
    \end{bmatrix}
}_\text{binary notation}
\end{align} 
where the part above the line is $G_{CCZ}$ and that below the line is $G_0$.
This triorthogonal matrix is maximal in the sense that $(G^{\wedge 2})^\perp = G_0$.
The resulting distillation routine has error probability $28p^2 + O(p^3)$
if input $T$-states have error probability $p$.
This protocol was given in \cite{campbell2017unified},
and is similar to those of \cite{eastin2013distilling,jones2013low}

For $m=6$, we find $n=64,k_{CCZ}=2$ and distance $4$,
of which the triples in terms of polynomials are 
$\{ x_1 x_2, x_3x_4, x_5 x_6\}$ and $\{x_2x_3, x_4x_5, x_6 x_1\}$.
We examined $m=6$ instance further to see 
if there could be more logical qubits extending the two triples,
but found that there does not exist any extra solution to the generalized triorthogonality equations.
Instead, we were able to extend $G_0$.
The resulting generalized triorthogonal matrix, denoting each row by a polynomial, is
\begin{align}
    \begin{pmatrix}
        x_1 x_2, ~ x_3 x_4, ~ x_5 x_6 \\
        x_2 x_3, ~ x_4 x_5, ~ x_6 x_1 \\
    \hline
        1, ~x_1,  ~x_2,  ~x_3, ~ x_4, ~ x_5,~  x_6 \\
x_1 x_3, ~x_3 x_5,~ x_1 x_5,~ x_1 x_3 x_5,  
\end{pmatrix}.
\end{align}
This triorthogonal matrix is also maximal in the sense that $(G^{\wedge 2})^\perp = G_0$.
The leading term in the output error probability is $2944 p^4$.
The coefficient was obtained by brute-force weight enumeration and MacWilliams identity.
This protocol is similar to that of \cite{Jones2013composite},
but not identical;
the 64$T$-to-2$CCZ$ protocol here has a smaller coefficient 
in the output error probability than that of \cite{Jones2013composite}.
If the efficiency measure of a distillation protocol is the ratio of the number of input $T$ gates
to the number of output $CCZ$ gates at a given order of error reduction,
then composing a quadratic $T$-to-$T$ protocol such as those of \cite{bh} and the 8$T$-to-1$CCZ$ protocol above 
is better than the 64$T$-to-2$CCZ$ protocol here.

For $m=9$ we find $n=512$, $k_{CCZ}=6$ and distance $8$.
We then did a numerical search to see if it would be possible to have a larger $k_{CCZ}$, 
restricting to the case that 
the triples of $G_{CCZ}$ are associated with triples of indicator vectors of weight $m/3$.
We were able to find $k_{CCZ}=10$,
and further extend $G_0$ to make the resulting triorthogonal matrix maximal
in the sense that $(G^{\wedge 2})^\perp = G_0$.
The resulting $[[512,30,8]]$ code is the following.
\begin{align}
G_{CCZ} &= \begin{pmatrix}
 x_4 x_5 x_7, & x_2 x_6 x_8, & x_1 x_3 x_9 \\
 x_4 x_5 x_9, & x_2 x_7 x_8, & x_1 x_3 x_6 \\
 x_3 x_4 x_6, & x_1 x_5 x_8, & x_2 x_7 x_9 \\
 x_1 x_8 x_9, & x_3 x_4 x_7, & x_2 x_5 x_6 \\
 x_2 x_5 x_9, & x_1 x_3 x_4, & x_6 x_7 x_8 \\
 x_1 x_4 x_5, & x_2 x_3 x_8, & x_6 x_7 x_9 \\
 x_3 x_5 x_6, & x_1 x_2 x_7, & x_4 x_8 x_9 \\
 x_1 x_3 x_8, & x_2 x_4 x_9, & x_5 x_6 x_7 \\
 x_2 x_3 x_5, & x_1 x_7 x_9, & x_4 x_6 x_8 \\
 x_3 x_8 x_9, & x_1 x_5 x_7, & x_2 x_4 x_6 
\end{pmatrix}\\
G_0 &= \begin{pmatrix}
 RM(r=2,m=9) \\
 x_1 x_2 x_9 \\
 x_1 x_2 x_8 \\
 x_6 x_8 x_9 \\
 x_3 x_7 x_8  
    \end{pmatrix}
\end{align}
Here, each line in $G_{CCZ}$ contains a triple of polynomials 
(actually monomials).
The algorithm we used was as follows.
We used a version of the algorithm in the constructive proof of the Lovasz local lemma
of Ref.~\onlinecite{mosertardos}.
We define a subroutine to initialize a triple, which, for given $i$, 
sets $a^{(i)},b^{(i)},c^{(i)}$ to be random indicator vectors of weight $m/3$ each,
subject to the constraint that $a^{(i)} \vee b^{(i)} \vee c^{(i)} = \onev$. 
That is, ``initializing a triple'' is to choose $a^{(i)}$ at random of weight $m/3$, 
and then choose $b^{(i)}$ at random of weight $m/3$ with its $1$ entries only in the $0$ entries of $a^{(i)}$, 
after which $c^{(i)}$ is determined by the constraint $a^{(i)} \vee b^{(i)} \vee c^{(i)} = \onev$.
Then we do the following:
\begin{itemize}
\item[{\bf 1.}]
Pick $k_{CCZ}$ and initialize $k_{CCZ}$ different triples.

\item[{\bf 2.}] 
Look for a violation of the triorthogonality conditions. 
We check rows $x,y,z$ of the matrix in lexicographic order.
A violation is when $x,y,z$ are not in the same triple but $x \vee y \vee z = \onev$.

\item[{\bf 3.}]
If a violation of the conditions exists for vectors $x,y,z$,
then we find the triples containing $x,y,z$, and initialize those (at most three) triples,
and go to {\bf 2}.
If no violation exists, exit the algorithm, reporting success.
\end{itemize}
We run this algorithm until it reports success or until we give up and terminate the algorithm.
We also tried a slight modification of the algorithm,
in which we did some random permutation of the triples at various steps
(this has an effect similar to randomizing the order in which we check the conditions).

\subsection{Lovasz Local Lemma}
The randomized numerics above used an algorithm in the constructive proof of the Lovasz local lemma.
Here, we show what the local lemma implies about the possible scaling of $k_{CCZ}$ for large $m$.
Note that, as we will see shortly, in the regime where we ran the algorithm above ($m=9$)
the local lemma does {\it not} guarantee a solution.

Suppose that there are $n_{triple}=k_{CCZ}$ triples.
Imagine choosing each triple at random, following the initialization routine of the above algorithm.
Label the triples by an integer ranging $1,\dots,n_{triple}$.
Define a bad event $E_{i,j,k}$ to be the event that for three triples, labelled $i,j,k$, with $1 \leq i<j<k \leq n_{triple}$, there is a violation
of the triorthogonality conditions involving one indicator vector from each triple.
We call such events $E_{i,j,k}$ ``three-triple events".
Define a bad event $E_{i,j}$
 to be the event that for two triples, labelled $i,j$, with $1\leq i<j\leq n_{triple}$, there is a violation
of the triorthogonality conditions involving one indicator vector from one triple and two indicator vectors from the other triple.
We call such event $E_{i,j}$ ``two-triple events".

The probability of $E_{i,j,k}$ can be estimated as follows: There are $3^3=27$ different choices of indicator vectors if we choose one indicator vector from each triple.  The vector from the first triple is random.  The probability that the vector from the second triple has no overlap with the vector from the first triple is
\[
\binom{\frac{2}{3}m}{\frac{1}{3}m} \Big/ \binom{m}{\frac{1}{3}m}.
\]
Conditioned on the vectors from the first two triples having no overlap, the probability that the vector from the third triple has no overlap with either of the other two vectors is
\[
1\Big/\binom{m}{\frac{1}{3}m}.
\]
Thus,
\begin{align}
\Pr(E_{i,j,k})
~ \leq ~
27
 \frac{\binom{\frac{2}{3}m}{\frac{1}{3}m} }{ \binom{m}{\frac{1}{3}m}\binom{m}{\frac{1}{3}m} }
~ \simeq ~
2^{-m (2 H(1/3)-2/3)},
\end{align}
where $H(p)=-p\log_2(p)-(1-p)\log_2(1-p)$ is the binary entropy function 
and the approximate equality is up to subexponential factors.
Note $H(1/3)\approx 0.918$ and $2H(1/3)-2/3\approx 1.17$.

The probability of $E_{i,j}$ can be estimated as follows: 
There are $\binom{3}{1} \binom{3}{2} + \binom{3}{2}\binom{3}{1} = 18$ ways 
to choose one indicator vector from $i$ and two from $j$ or two from $i$ and one from $j$.
Suppose we choose two from $i$;
they have no overlap by construction and the probability that the vector from $j$
has no overlap with them is 
\[
1\Big/ \binom{m}{\frac{1}{3}m}.
\]
Thus,
\begin{align}
\Pr(E_{i,j}) 
~\leq~  
18
\frac{1}{\binom{m}{\frac{1}{3}m}} 
~\simeq~ 
2^{-mH(1/3)}.
\end{align}

We use the following statement of the Lovasz local lemma~\cite{alon2004probabilistic}.
Define a dependency graph on a set of events
such that two events are adjacent if and only if they are dependent.
For event $A$, let $\Gamma(A)$ denote the set of neighbors of $A$ in the dependency graph.
If one can choose a number $x(A)$ for each event $A$, $0\leq x(A)<1$, such that
for all $A$ we have
\begin{align}
\label{LLLeq}
\Pr(A) \leq x(A) \prod_{B \in \Gamma(A)} (1-x(B)),
\end{align}
then there is a nonzero probability that no event occurs.

Consider the dependency graph of all bad events (either three-triple or two-triple).
For bad events to be dependent, they must share one triple at least.
Hence, the neighborhood of any bad event (either three-triple or two-triple) 
includes $O(n_{triple}^2)$ three-triple events and $O(n_{triple})$ two-triple events.
Let us simply choose $x(A)=2 \Pr(A)$ for all bad events $A$.
Then, to have Eq.~\eqref{LLLeq}, it suffices to have that
$\prod_{B\in \Gamma(A)} (1-x(B)) \geq 1/2$,
which is implied by $\sum_{B\in \Gamma(A)} \pr(B) \leq 1/4$.
So, it suffices that
$n_{triple}^2 2^{-m(2 H(1/3)-2/3)}+n_{triple} 2^{-mH(1/3)}=O(1)$.
Thus, we want
\begin{align}
n_{triple} ~\lesssim ~ 2^{-m(H(1/3)-1/3)} ~\lesssim~ 2^{0.58 \ldots m}.
\end{align}

Therefore, if $k_{CCZ} = n_{triple} \lesssim 2^{0.58\ldots m}$ 
(neglecting subexponential factors in $m$),
the randomized algorithm finds a code $[[2^m,3k_{CCZ},2^{m/3}]]$
where the triorthogonality conditions for $CCZ$ are satisfied.

\subsection{Error Probabilities and Quantitative Values}
The generalized triorthogonal matrix has distance $d=2^{m/3}$.
The number of error patterns of weight $d$
which do not violate any stabilizer of the code is equal to 
the number of code words of $RM(2m/3,m)$ with weight $d$.
This is known\cite{sb} to equal
\begin{align}
A_d=\frac{2^m(2^m-2^0) (2^m-2^1)(2^m-2^2) \ldots (2^m-2^{\mu-1})}{2^\mu(2^\mu-2^0) (2^\mu-2^1)(2^\mu-2^2) \ldots (2^\mu-2^{\mu-1})},
\end{align}
where $\mu=m-r$ with in this case $r=2m/3$ so $\mu=m/3$.
For $m=3$, $A_d=28$.  
For $m=6$, $A_d=10416$.  
For $m=9$, $A_d=50434240 \approx 5 \times 10^7$.
The leading coefficient in the output error rate is of course at most these numbers,
since there could be $Z$-stabilizers of weight $d$.  Further, in the $m=6$ and $m=9$ cases above, we extended $G_0$
so the number of error patterns of weight $d$ is strictly smaller than $A_d$.
Indeed, for our maximal $m=6$ code, a direct enumeration shows
that there are 3248 error patterns that does not violate $X$-stabilizers,
out of which 304 are $Z$-stabilizers.

It is also known\cite{bs} that all weights of $RM(2m/3,m)$ between $d$ and $2d$ are of the form $2d-2^i$ for some $i$, 
so that the next weight after $d$ is equal to $3d/2$.  

To give some numbers when using these codes in a distillation protocol, 
consider the $m=9$ case with $k_{CCZ}=10$.
Suppose we have an input error probability $\ein=10^{-3}$.
Then, the probability that the protocol succeeds 
(i.e., that no stabilizer errors are detected) 
is lower bounded by $(1-\ein)^{512}\approx 0.599$.
The average number of output $CCZ$ magic states is then $\nCCZb \approx 5.99$.
We expect that for $m=9$ the contribution of errors with weight $3d/2=12$ 
will be negligible compared to the leading contribution.
Thus, we approximate that the output error probability by
$\eout \approx  A_d \ein^8 (1-\ein)^{504} \approx 3.0 \times 10^{-17}$.
where the factor $(1-\ein)^{504}$ represents 
the requirement that none of the other input $T$ gates have an error.
We expect that this is an overestimate because, as mentioned above, 
not all error patterns of weight $d$ that do not violate a stabilizer will lead to a logical error and also we have added additional stabilizers to $G_0$.
Thus, the ratio $\ebar = \eout / \nCCZb \approx 5.1 \times 10^{-18}$.
We use $512 / \nCCZb \approx 85.5$ $T$-gates per output $CCZ$ magic state.

It requires\cite{song2004optimal,jones2013low} $4$ high-quality $T$-gates to produce a single high-quality $CCZ$ state, so this
protocol's efficiency is comparable, if the goal is to produce $CCZ$ states,
to a protocol that uses only $85.5/4 \approx 21.4$ input $T$-gates per output $T$-gate.
Since one uses $4$ $T$-gates to make a $CCZ$ state, 
the quality of those output $T$-gates must be four times better than the needed $CCZ$ quality.

If one is able to improve the input error rate then the protocol becomes more efficient as the success probability becomes higher, asymptoting at $51.2$ $T$-gates per output $CCZ$ magic state, comparable to a protocol using $12.8$ input $T$-gates to produce an output $T$-gate.
Alternatively, one can also make the protocol more efficient by applying error correction as follows.  
Choose some integer $m\geq 0$.  
Then, modify the protocol; as usual, one encodes logical qubits in the $|+\rangle$ state into the error correcting code, 
applies a transversal $T$-gate, and then measures the stabilizers.  
However, while usually one would declare failure if any stabilizer errors occur, 
one can instead  apply error correction: 
if the error syndrome can be caused by at most $m$ errors, 
then one corrects those errors by applying Pauli $Z$ operators to the appropriate physical qubits.
For example, at $\ein=10^{-3}$, the probability that there are $0$ or $1$ input errors is equal to 
$(1-\ein)^{512}+512\ein(1-\ein)^{511} \approx 0.906$, 
giving the acceptance probability for $m=1$.  Applying
this error correction does reduce the quality of the output states: with $m=1$, 
now seven input errors can cause a logical error.
The number of such weight seven input error patterns that cause a logical error 
is at most $8 A_d$, so that the output error per output logical qubit is 
approximately $8 A_d \ein^7 / 10 \approx 5 \times 10^{-14}$.

\subsection{Punctured Reed-Muller Codes}
\label{prmc}

Motivated by the puncturing ideas of \ref{triopunc}, 
we have considered puncturing a Reed-Muller code.
Instead of using $RM(m/3-1,m)$ as before, we now consider $RM(r,3r+1)$.
This code is triorthogonal as before, 
and is maximal in the sense that $(G^{\wedge 2})^\perp = G_0$.
We then randomly puncture this code.
The codes we found numerically are listed in Tables~\ref{tb:prm1},\ref{tb:prm2}.
Observe that the coefficients $A_d$ in the output error probabilities 
are fairly small given the code lengths.

We found that there is a unique $d=5$ code that can be obtained by puncturing $RM(2,7)$;
it is $[[125,3,5]]$.
This was simple to check: Any three-puncture in $RM(r,m>1)$ is 
equivalent,\footnote{A punctured code from a Reed-Muller code
is determined by the isomorphism class under affine transformations 
of the set of points corresponding to the punctured coordinates in the $m$-dimensional unit hypercube,
since an affine transformation is an automorphism of $\FF_2[x_1,\ldots,x_m]/(x_1^2-x_1,\ldots,x_m^2-x_m)$.
Any three-point set in the unit hypercube is affinely independent,
and hence is affinely equivalent to any other three-point set.}
and we numerically verified that any four-puncture to $RM(2,7)$ gave $d = 4$.

Let us now explain our numerical techniques.

The number $k$ of logical qubits in each case in the tables was calculated after the puncture;
$k$ is equal to the number of punctures only if the submatrix of the generating matrix of $RM$ on the punctured coordinates is full rank.
The $Z$-distance, which is relevant to the distillation purposes,
is computed either by the MacWilliams identity applied to $X$-stabilizer weight enumerators
that are computed by brute force enumeration,
or by enumerating all $Z$-logical operators of a given weight.
The computed $Z$-distance is in fact the true code distance
since the $Z$-stabilizer group contains a subgroup associated with the bit strings of the $X$-stabilizer group.
The MacWilliams identity was an effective method 
especially when the base code was $RM(2,7)$ where there are only $29$ $X$-stabilizers prior to puncture.
For this base code, we simply did a random search, trying many different random punctures of the code, and selected good examples that we found.

When the base code was $RM(3,10)$, there are 176 $X$-stabilizers to begin with,
so the brute force enumeration of the $X$-stabilizer weight enumerator became prohibitive unless many coordinates were punctured.  Also, at larger distances ($\geq 5$), a guided search became more efficient than a random search among codes.
To solve both these problems, we used an ``unpuncturing'' strategy based on the following observation.
Let $G_0$ be a matrix whose rows represent $X$-stabilizers,
and suppose $G'$ is a matrix whose rows represent $X$-logical operators
such that any $Z$-logical operator of minimal weight $d$ anticommutes with at least one $X$-logical operator of $G'$.
Then, we consider a new $X$-stabilizer matrix $\begin{bmatrix} I & G' \\ 0 & G_0 \end{bmatrix}$.
We claim that this new code does not have any $Z$-logical operator of weight $\le d$.
The proof is simple: If the bit string $v$ of a $Z$-logical operator of weight $\le d$ 
have nonzero substring on the columns of $G_0$, then, by construction,
that substring must have weight at least $d$, but such a substring has odd overlap with some row of $G'$
which must be cancelled by the substring on the columns of $I$.
This forces the weight to be larger than $d$.
The construction of a new code by adding more stabilizers and qubits,
is precisely the inverse of the puncturing procedure (up to permutations of qubits),
hence the name ``unpuncturing.''

\begin{table}
\caption{Punctured Reed-Muller codes I.
In this table, the base code prior to puncturing is $RM(2,7) = [128,29,32]$.
The decimal integers are short-hand notation for the binary coordinate
that indexes bits in the Reed-Muller code;
e.g., ``3'' in the first example means 
that one has to puncture the bit labelled by $0000011 \in \FF_2^7$.
The number of $Z$-logical operators of weight $d$ 
is obtained by the MacWilliams identity
applied to the $X$-stabilizer weight enumerators.
Since the $Z$ stabilizer group in any case corresponds 
to a subspace of dual of the pre-puncture Reed-Muller code,
the minimal weight of any $Z$ stabilizer is at least 8.
Every $X$-stabilizer has weight a multiple of 8,
and there is a basis of $X$-logical operators such that each basis element has weight $7 \mod 8$.
Hence, the transversal $T$ becomes $T^\dagger$ on every logical qubit.
As a distillation protocol, the output error probability is $A_d p^d$
at the leading order where $p$ is the independent error probability of the input $T$ states.
}
\centering
\begin{tabular}{c}
\hline\hline
Code parameter $[[n,k,d]]$ and $A_d = \#$ ($Z$-logical operators of weight $d$) \\
Decimal representation of binary coordinates to puncture\\
\hline\hline
$[[114,14,3]]$, $A_3=30$, $n/k = 8.14$\\
3, 10, 19, 20, 64, 66, 72, 96, 99, 104, 110, 114, 115, 124\\
\hline
$[[112,16,3]]$, $A_3=96$, $n/k =7$ \\
6, 8, 13, 14, 17, 28, 29, 33, 44, 57, 65, 75, 79, 82, 106, 116\\
\hline
$[[109,19,3]],$ $A_3=324$, $n/k=5.73$\\
10, 15, 16, 17, 32, 39, 40, 41, 48, 59, 66, 69, 72, 81, 100, 102, 108, 120, 126\\
\hline
$[[118,10,4]]$, $A_4=210$, $n/k=11.8$\\
11, 17, 19, 59, 74, 76, 91, 99, 105, 110\\
\hline
$[[116,12,4]]$, $A_4 = 495$, $n/k = 9.67$\\
0, 31, 52, 61, 73, 94, 96, 112, 114, 115, 118, 120\\
\hline\hline
\end{tabular}
\label{tb:prm1}
\end{table}

\begin{table}
\caption{
Punctured Reed-Muller Codes II, continued from Table~\ref{tb:prm1}.
In this table, the base code prior to puncturing is $RM(3,10) = [1024,176,128]$.
The bound on $A_7$ of $[[937,87,7]]$ is from the exact weight enumerator (not shown) of $[[887,137,5]]$; we believe the true value of $A_7$ is much smaller based on the previous examples.
}
\centering
{\footnotesize
\begin{tabular}{c}
\hline\hline
$[[863,161,3]]$, $A_3 = 3231$, $n/k = 5.36$\\
3,4,7,10,15,39,42,44,45,49,59,66,68,70,72,74,91,103,104,109,119,120,122,123,130,161,164,170,\\
183,186,200,208,214,233,236,237,248,270,278,288,294,295,296,304,307,321,323,338,341,347,353,\\
356,359,360,365,374,377,404,411,414,425,443,447,455,465,470,474,477,480,482,492,493,502,507,\\
509,511,513,517,525,528,539,543,550,555,567,577,581,598,599,600,602,603,608,609,612,616,620,\\
621,628,638,646,652,659,660,669,678,681,687,714,728,738,739,741,743,744,745,748,750,758,768,\\
786,791,794,795,806,822,843,844,845,853,855,864,865,884,889,891,892,902,907,913,916,921,939,\\
942,
943,944,945,951,953,961,965,971,978,980,984,985,992,1002,1005,1012,1018\\
\hline
$[[872,152,4]]$, $A_4 = 1514$, $n/k = 5.74$\\
31,35,45,46,50,62,85,89,91,113,118,119,122,127,140,144,157,168,169,171,173,186,190,210,218,219,\\
228,230,237,244,249,254,263,271,281,282,308,336,352,353,398,404,405,411,412,441,444,455,456,460,\\
471,474,475,480,484,488,492,502,504,507,511,517,520,522,532,542,543,559,570,574,577,578,579,580,\\
583,592,598,601,602,605,608,612,615,618,620,637,643,644,653,658,667,688,690,694,714,717,724,727,\\
737,745,752,754,758,764,765,770,782,794,795,802,808,812,813,814,815,823,824,838,847,849,850,852,\\
861,863,867,871,874,880,901,907,911,915,919,921,924,926,941,950,954,969,971,972,976,977,982,991,\\
995,999,1008,1013,1014,1023\\
\hline
$[[887,137,5]]$, $A_5 = 709$, $n/k = 6.47$\\
11,21,30,37,39,53,68,74,78,82,98,105,107,120,130,136,148,149,152,161,162,163,181,194,209,210,211,\\
233,234,243,244,267,269,274,277,281,284,298,317,324,325,329,341,361,362,375,389,399,400,405,412,415,\\
423,425,449,480,487,495,507,511,522,538,542,557,563,578,579,584,593,600,609,610,619,622,623,635,638,\\
639,640,643,644,651,653,655,657,661,671,672,678,680,692,714,727,737,775,777,792,796,806,817,826,827,\\
831,833,834,837,851,852,854,857,866,868,871,875,880,890,891,896,897,898,916,924,936,938,941,958,964,\\
965,966,973,975,983,984,990,996,997,1022\\
\hline
$[[912,112,6]]$, $A_6 = 1191$, $n/k = 8.14$\\
11,21,37,39,68,74,78,82,98,107,130,148,152,161,162,163,181,194,209,210,211,233,243,244,267,269,274,\\
277,298,317,324,325,329,341,361,362,399,405,412,415,423,425,480,487,495,507,522,542,557,563,579,584,\\
593,600,609,610,619,622,623,635,639,640,653,655,657,661,671,672,678,680,692,714,727,737,775,777,792,\\
796,806,826,827,831,833,834,837,851,852,854,857,866,871,875,880,890,891,896,897,898,916,924,936,938,\\
941,958,965,966,983,984,990,996,997,1022\\
\hline
$[[937,87,7]]$, $A_7 \le 1887905$, $n/k = 10.77$\\
21,37,39,68,74,82,98,130,148,152,162,163,194,209,210,211,233,244,267,269,274,317,324,325,329,341,361,\\
362,399,405,412,415,423,480,487,495,507,522,542,557,563,584,593,600,609,610,623,635,639,640,657,661,671,\\
672,692,714,727,737,777,792,796,826,827,831,833,834,837,851,852,854,857,871,875,880,890,891,896,897,898,\\
924,936,958,966,984,996,997,1022\\
\hline\hline
\end{tabular}
}
\label{tb:prm2}
\end{table}

For small distances, e.g., $d = 3$, it is easy to enumerate all $Z$-logical operators of weight $d$.
We then select $X$-logical operators to ``catch'' those minimal weight $Z$-logical operators,
and identify the punctured coordinates that gave rise to the chosen $X$-logical operators.
One $X$-logical operator $\bar X$ was chosen each time so that
the number of the mimimal weight $Z$-logical operators
that $\bar X$ anticommutes with is maximized.
The codes in Table~\ref{tb:prm2} were found by this unpuncturing.
We started with a random puncturing giving a $d=3$ code and then successively unpunctured to obtain distance $4,5$ codes.
The $d=6$ and $d=7$ codes in Table~\ref{tb:prm2} were obtained by unpuncturing the best rate code with $d=5$ that we found.
Note that for the code $[[937,87,7]]$,
it was prohibitively costly to enumerate all logical operators of weight $7$,
so we contented ourselves by an upper bound on the number of $Z$-logical operators.
The bound was possible since we computed, by brute force, the $X$-stabilizer's weight enumerator of $[[887,137,5]]$,
unpuncturing which yielded $[[937,87,7]]$; 
while in general this $X$-stabilizer weight enumerator is very costly to compute as we explained above, 
it was possible to compute it for a single code example 
(it would not be practical to compute this enumerator for all the codes tried in a random search).

To give some numbers when these codes are used in a distillation protocol, 
consider a $\ein=10^{-3}$ input error rate using the $[[912,112,6]]$ code.
In this case, the probability that the protocol succeeds is at least 
$p_{acc} = (1-\ein)^{912}=0.401$.
The average number of output $T$ magic states is then $\nTb \approx 44.97$.
We expect that the dominant contribution to the errors is from the leading order so
we approximate the output error probability per output state by
$\ebar \approx  A_6 \ein^6 (1-\ein)^{906} / \nTb \approx 1.07 \times 10^{-17}$.
We use $912 / \nTb \approx 20.28$ $T$-gates per output $CCZ$ magic state.
One can also use error correction to increase the success probability
at the cost of an increase in output error rate.
If one corrects a single error, the acceptance probability becomes approximately
$p_{acc} = (1-\ein)^{912}+912\ein(1-\ein)^{911} \approx 0.768$.  
Applying this error correction does reduce the quality of the output states 
since now five input errors can cause a logical error.
The number of such weight five input error patterns that cause a logical error 
is at most $6 A_6$, so that the output error per output logical qubit is 
approximately $6 A_6 \ein^5(1-\ein)^{907} / 112 p_{acc} \approx 3.35 \times 10^{-14}$ 
with a number of input states per output state of approximately $10.60$.
These ratios of input to output states are better 
than any protocol we know with the given input and output error rates; 
further, the performance of the punctured codes will improve 
at lower input errors where the success probability becomes closer to $1$.
We expect that for the $[[937,87,7]]$ code one can find even lower output error rates.

We have explained a distillation protocol that is particularly well-suited 
for any punctured (and hence for all) triorthogonal code
in Section~\ref{triopunc}~\cite{Fowler2012}.
Since
$RM(r,3r+1)$ is triply even, the Clifford correction after applying transversal $T$ is absent and so
the only Clifford cost for the present punctured Reed-Muller codes 
is in the preparation of the stabilizer state
$\sum_{v \in RM(r,3r+1)} \ket{v}$.

The Clifford circuit to prepare this stabilizer state is a coherent version of a classical encoding circuit,
and there exists an encoding circuit of depth $m$ using $(m/2) 2^{m}$ CNOTs
(if one can implement CNOT across any pair of qubits),
using the recursive construction of Reed-Muller codes.  
This circuit can be described as follows: using $2^m$ qubits labelled by bit strings of length $m$, prepare all qubits labelled by bit strings with Hamming weight $\leq r$ in the $|+\rangle$ state and prepare all other qubits in the $|0\rangle$ state.  Then, for $m$ rounds, labelled by integers $1,\ldots,m$, do the following: on the $j$-th round, for each of the $2^{m-1}$ qubits labelled by a bit string with a $0$ in the $j$-th position of that bit string, apply a CNOT with that qubit as source and with the target being the qubit labelled by the bit string which agrees everywhere with the source bit string, except that it is $1$ in the $j$-th position.
This circuit is the same as the encoding circuit used for polar codes, up to different choices of the input state~\cite{arikan2009channel,renes2012efficient}.

\section{$T$-to-$CCZ$ protocols using hyperbolic weakly self-dual CSS codes}
\label{sec:hyper}

In Ref.~\cite{HHPW},
we have classified weakly self-dual CSS codes on $\nin$ qubits into two types.
If $\cS$ is the self-orthogonal subspace of $\FF_2^n$ 
corresponding to the stabilizers of the code,
the distinction criterion is whether $\cS$ contains all-1 vector $\vec 1$.
If $\vec 1 \in \cS$,
the space of representing logical operators $\cS^\perp / \cS$ is hyperbolic,
and the parameters $\nin$, $\kin$, and the code distance must be even numbers.
For hyperbolic codes, the binary vector space corresponding to the logical operators
is isomorphic to direct sum of hyperbolic planes.
Here, we only consider hyperbolic codes.
Choose a basis $\{ \ell^{(1)}, \ell^{(2)}, \ldots, \ell^{(\kin)} \}$ 
of $\cS^\perp/ \cS$ such that the dot product between the basis vectors satisfy%
\footnote{
A weakly self-dual CSS code, defined by a self-orthogonal binary subspace $\cS$,
is hyperbolic if and only if $\cS^\perp / \cS$ admits such a basis.
See \cite[Sec.~3]{HHPW} for further details.
}
\begin{align*}
    \ell^{(2a-1)} \cdot \ell^{(2b-1)} &= 0\\
    \ell^{(2a)} \cdot \ell^{(2b)} &= 0 & \text{for } a ,b=1,\ldots,\kin/2 \\ 
    \ell^{(2a-1)} \cdot \ell^{(2b)} &= \begin{cases} 1 & \text{if } a = b ,\\ 0 & \text{otherwise.} 
    \end{cases}\label{eq:hyperbolicpairing}
\end{align*}
We call such a basis hyperbolic,
and Gram-Schmidt procedure can be used to find a hyperbolic basis.
We define logical operators as
\begin{align}
    \tilde X_{2a-1} &= X(\ell^{(2a-1)}),& \tilde Z_{2a-1} &= Z(\ell^{(2a)}), \\
    \tilde X_{2a} &= X(\ell^{(2a)}), & \tilde Z_{2a} &= Z(\ell^{(2a-1)}), & 
    \text{for } a = 1,\ldots,\kin/2. \nonumber
\end{align}
Note that this is different from the magic basis of Ref.~\cite{HHPW}
where a pair of logical qubits are swapped under the transversal Hadamard.

We now investigate the action of transversal $\sgate$ gate.
Since $\sgate X \sgate^\dagger = Y = -iZX$, unless $\nin$ is a multiple of 4,
the transversal $\sgate$ is not logical.
However, there is a simple way to get around this.
Instead of applying $\sgate$ on every qubit,
we assign exponents $t_i = \pm 1$ to each qubit $i$,
which depends on the code,
and apply $\bigotimes_i \sgate^{t_i}$.
We choose $t_i$ such that
\begin{align}
    \sum_i v_i t_i &= 0 \mod 4 \text{ for any } v \in \{ b^{(1)}, \ldots, b^{(\dim \cS)} \} \subset \cS,\\
    \sum_i \ell^{(a)}_i t_i &= 0 \mod 4 \text{ for } a = 1,\ldots,\kin,
\end{align}
where it is implicit that the elements of $\FF_2$ are promoted to usual integers
by the rule that $\FF_2 \ni 0 \mapsto 0 \in \mathbb Z$ and $\FF_2 \ni 1 \mapsto 1 \in \mathbb Z$,
and $\{ b^{(j)} \}$ is a basis of the $\FF_2$-vector space $\cS$.
A solution $t_i$ to these conditions always exists,
because the Gauss elimination for the system of equations over $\mathbb Z / 4\mathbb Z$,
never encounters division by an even number when applied to a full $\FF_2$-rank matrix.
Once we have a valid $t_i$, then it follows that $\sum_i v_i t_i = 0 \mod 4$ 
for {\it any} vector $v \in \cS$.
Since any vector is a sum of basis vectors,
which are orthogonal with one another,
this follows from the following identity.
For any integer vector $y$~\cite{Ward1990,bh}
\begin{align}
    \sum_i y_i \mod 2 &= \sum_i y_i - 2 \sum_{i < j} y_i y_j &\mod 4 \label{eq:mod4},\\
    \sum_i y_i \mod 2 &= \sum_i y_i - 2 \sum_{i < j} y_i y_j + 4 \sum_{i<j<k} y_i y_j y_k &\mod 8. \label{eq:mod8}
\end{align}
Likewise, for {\it any} vector $\ell \in \cS^\perp$ and any $s \in \cS$,
we have $\sum_i \ell_i t_i = \sum_i (\ell + s \mod 2)_i t_i \mod 4$.

We now show that the action of $\bigotimes_i \sgate^{t_i}$ on 
the logical state $\ket{\tilde x_1,\ldots,\tilde x_\kin}$ is
control-$Z$ on hyperbolic pairs of logical qubits:
\begin{align}
&\left(\bigotimes_i \sgate^{t_i} \right) \ket{\tilde x_1, \ldots, \tilde x_\kin} \nonumber\\
&= \frac{1}{\sqrt{|\cS|}} \sum_{s \in \cS} e^{(i\pi/2) \sum_j f_j t_j} 
\ket{f = s + x_1 \ell^{(1)} + \cdots + x_\kin \ell^{(\kin)} \mod 2 }\nonumber\\
&= e^{(i\pi/2) 
\left( \sum_{a} x_a \sum_j \ell^{(a)}_j t_j - 2 \sum_{a<b} x_a x_b \sum_j \ell^{(a)}_j\ell^{(b)}_j t_j \right)
}\times \nonumber\\
& \qquad \qquad
\frac{1}{\sqrt{|\cS|}} \sum_{s \in \cS}
\ket{s+ x_1 \ell^{(1)} + \cdots + x_\kin \ell^{(\kin)} \mod 2 }\nonumber\\
&= \left( \prod_{j=1}^{\kin/2} \widetilde{CZ}_{2j-1,2j} \right) 
  \ket{\tilde x_1, \ldots, \tilde x_k}\label{eq:Saction}
\end{align}
where 
in the third line we used \eqref{eq:mod4} 
and in the last line we used \eqref{eq:hyperbolicpairing}.

Therefore,
if we implement control-$\sgate$ gate over a hyperbolic code,
then we implement a measurement routine for product of $CZ$ operators.
The control-$S$ can be implemented using an identity
\begin{align}
    ^C \sgate &= (^C e^{i\pi/4}) T(^C X)T^\dagger (^C X)
\label{eq:TXTX}
\end{align}
where $^C U = \ket 0 \bra 0 \otimes I + \ket 1 \bra 1 \otimes U$;
in particular, $^C e^{i \pi /4} = \left( \ket 0 \bra 0 + e^{i \pi / 4} \ket 1 \bra 1 \right) \otimes I$.
Since a hyperbolic CSS code contains $\onev$ in the stabilizer group,
we know $\sum_i t_i \vec 1_i = 0 \mod 4$, and
the control-phase factor will either cancel out or become $Z$ on the control.
If $T$ gates in this measurement routine are noisy with independent $Z$ errors of probability $p$,
then upon no violation of stabilizers of the hyperbolic code,
the measurement routine puts $O(p^2)$ error into the measurement ancilla,
and $O(p^d)$ error into the state under the measurement
where $d$ is the code distance of the hyperbolic code.

\subsection{Quadratic error reduction}

The control-$Z$ action on the logical level can be used to implement control-control-$Z$,
whenever the hyperbolic code is encoding one pair of logical qubits.
The smallest hyperbolic code that encodes one pair of logical qubits 
is the 4-qubit code of code distance $2$, with stabilizers $XXXX$ and $ZZZZ$.
The choice of logical operators that conforms with our hyperbolic conditions is
\begin{align}
    \begin{pmatrix}
        X & X & I & I \\
        I & Z & Z & I \\
        \hline
        I & X & X & I \\
        Z & Z & I & I 
    \end{pmatrix}.
\end{align}
The exponents $t_i$ for $\sgate$ is thus
\begin{align}
t = \begin{pmatrix} + & - & + & - \end{pmatrix}.
\end{align}
Using this choice of $t_i$, the phase factor in \eqref{eq:TXTX} cancels out.

Every non-Clifford gates enters the circuit by \eqref{eq:TXTX},
and hence any single error will be detected.
Since the ancilla that controls $\sgate$ inside the hyperbolic code can be contaminated
by a pair of $T$ gates acting on the same qubit,
there is little reason to consider hyperbolic code of code distance higher than 2.
When applied to $\ket{+^{\otimes 3}}$,
the routine described here outputs one $CCZ$ state using 8 $T$-gates,
with output error probability $28p^2 + O(p^3)$ 
where $p$ is the independent error rate of $T$ gates.

The overall circuit is very similar to quadratic protocol in
Jones~\cite{Jones2013composite}
in which the same choice of logical operators are used,
but control-$(TXT^\dagger)^{\otimes 4}$ is applied on the code,
followed by syndrome measurement and then 
$\pi/2$ rotation along $x$-axis on the Bloch sphere.
In contrast, 
we apply control-$(TXT^\dagger X)_1(XTXT^\dagger)_2(TXT^\dagger X)_3(XTXT^\dagger)_4$,
and then syndrome measurement, without any further Clifford correction.

\subsection{Quartic error reduction}

For a higher order error suppression of $CCZ$ states,
we use the hyperbolic codes to check the eigenvalue of
the stabilizers of the $CCZ$ state $\ket{CCZ} = CCZ\ket{+^{\otimes 3}}$.
The stabilizers are $(CZ)_{12}X_3$, $(CZ)_{13}X_2$, and $(CZ)_{23}X_1$.
(These are obtained by conjugating $X_{1,2,3}$, 
the stabilizers of $\ket{+^{\otimes 3}}$, by $CCZ$ gate.)
As there are three stabilizers, we need three rounds of checks.
By symmetry, it suffices to explain how to measure $(CZ)_{12}X_3$.

Suppose we have a hyperbolic weakly self-dual CSS code of parameters $[[\nin,2k,4]]$.
This is our inner code~\cite{HHPW}.
(For example, there is a quantum Reed-Muller code 
of parameters $[[2^m, 2^m-2m-2, 4]]$ for any $m \ge 4$.
There are also Majorana codes which can be interpreted 
as hyperbolic codes on qubits~\cite{BravyiLeemhuisTerhal2010Majorana,Hastings,RMqubit}.)
Take $k$ independent output $CCZ$ states from the quadratic protocol in the previous subsection,
and separate a single qubit from each of the $CCZ$ states.
On these separated qubits we act by $^C X$ with a common control.
The remaining $2k$ qubits are then embedded into the hyperbolic code,
with which $^C (CZ)$ will be applied on the logical qubits,
using $2n$ $T$ gates with independent error probability $p$.
It is important that the control qubit is common for all controlled gates.
This way, the product of $k$ stabilizers on the $k$ $CCZ$ states are measured.

One has to run this check routine three times
for each of the three stabilizers of $CCZ$ states.
In total, the number of input $T$ gates is $8k + 6\nin$ 
where $8k$ is from the protocol in the previous subsection,
and $3 \cdot 2\nin$ is inside the distance-4 hyperbolic inner code.

Upon no stabilizer violations of the inner code and outer code measurements,
the protocol outputs $k$ $CCZ$-states.
If the inner hyperbolic code does not have any error on $T$ gates while implementing $^C(CZ)$,
then the output $CCZ$ states' error rate is quadratic in the input $CCZ$ states' error rate.
This being quadratic is due to the fact that we have an outer code of code distance 2.
(An outer code is one that specifies which input states to check.
See \cite{HHPW} for detail.)
Thus, the output error from this contribution is $\binom{k}{2} (28p^2)^2$ at the leading order.

There could be a pair of errors in the $T$ gate inside the inner code
that flips the eigenvalue measurement of $(CZ)X$.
In order for this type of error to be output error there must be an odd number of
errors in the input $CCZ$ states.
Hence, the contribution to the output error probability is 
$k \cdot 28p^2 \cdot 3n p^2$ at leading order.

Finally, the inner code may have 4 errors leading to logical errors
since the code distance is 4.
An upper bound on this contribution to the output error probability is
$3 \cdot 2^3 A_4p^4$, 
where $A_4$ is the number of $Z$ logical operators of the inner code of weight 4.
The factor of $2^3$ is because one $Z$ error on a qubit of the inner code can occur
in one of two places, and the half of all such configurations lead to an accepted output.
This is likely an overestimate because a logical error from a check out of three checks
can be detected by a later check.
In case of the Reed-Muller codes, we see
$A_4([[16,6,4]]) = 140$,
$A_4([[32,20,4]]) = 620$, and
$A_4([[64,50,4]]) = 2604$.

Using $[[16,6,4]]$,
the output error probability has leading term at most $9744 p^4$ or
$\ebar = 3.2 \times 10^3 p^4$ per output,
and the input $T$ count is $\nTbar = 40$ per output $CCZ$.
This particular protocol is worse in terms of input $T$ count 
than the protocol by a generalized triorthogonal code above, the protocol of \cite{Jones2013composite},
or a composition of quadratic protocols of $T$-to-$T$~\cite{bh} and 8$T$-to-1$CCZ$~\cite{campbell2017unified},
but better in terms of space footprint ($< 25$ qubits).
Using $[[32,20,4]]$ we see $\ebar \approx (7.7 \times 10^3) p^4$ and $\nTbar = 27.2$.
Using Reed-Muller $[[64,50,4]]$, we see $\ebar = (4.3 \times 10^4) p^4$ and $\nTbar = 23.4$.
For large $m$, i.e., encoding rate near 1,
the input $T$ count approaches 20 per output $CCZ$.

We have ignored the acceptance probability.
Since the input $CCZ$ states can be prepared independently using only 8 $T$ gates,
we may assume that the preparation is always successful.
Termination of the protocol is due to nontrivial syndrome on the distance 4 code.
Since there are $6n$ $T$ gates, 
the overall acceptance probability is at least
$(1-p)^{6n}$.

In the next section, we present another family that has even lower asymptotic input $T$ count.

\section{Clifford stabilizer measurements using normal weakly self-dual CSS codes}
\label{sec:normal}

As an extension of Ref.~\cite{HHPW},
we can turn $m$-copies of any normal weakly self-dual  CSS code (normal code)
into a measurement routine of magic states of form $U\ket{+^{\otimes m}}$
where $U$ belongs to the third level of Clifford hierarchy.
This is based on the observations that such states have Clifford stabilizers of form $V_i = UX_i U^\dagger$,
which can be measured by controling the middle $X_i$,
and that any normal code admits transversal implementation of logical $V_i$.
For the clarity of presentation,
we will explain protocols for distilling $CCZ$ states,
and leave general cases to the readers.
If $V_i$ involves $S$-gates, one has to choose appropriate exponents $t_i = \pm 1$ 
such that $\otimes_i S_i^{t_i}$ on physical qubits becomes a logical $S$-gate;
see the previous section.

Recall that a normal code is a weakly self-dual CSS code,
defined by a self-orthogonal binary vector space $\cS$
such that $\onev \notin \cS$.
In such a code the binary vector space $\cS/\cS^\perp$
corresponding to the logical operators,
has a basis such that any two distinct basis vectors have even overlap (orthogonal)
but each of the basis vector has odd weight.
Associating each basis vector to a pair of $X$- and $Z$-logical operators,
we obtain a code where the transversal Hadamard 
induces the product of all logical Hadamards.

Observe that in a normal code the transversal $X$ anti-commutes with
every $Z$ logical operator, and hence is equal to, up to a phase factor,
the product of all $X$ logical operator.
In the standard sign choice of logical operators 
where every logical $X$ is the tensor product of Pauli $X$,
the transversal $X$ is indeed equal to the product of all $X$ logical operators.
Likewise, the transversal $Z$ is equal to the product of all $Z$ logical operators.
Then, it follows that control-$Z$ across a pair of identical normal code blocks 
is equal to the product of control-$Z$ operators over the pairs of logical qubits.

Therefore, given three copies, labeled $A,B,C$, of a normal code $[[\nin,\kin,d]]$,
if we apply $\bigotimes_{i=1}^{\nin} CZ_{Ai,Bi}X_{Ci}$,
then the action on the code space is equal to
$\bigotimes_{j=1}^{\kin} \widetilde{CZ}_{Aj,Bj} \tilde X_{Cj}$.

Having a transversal operator that induces 
the action of the stabilizer $(CZ)X$ of $CCZ$-state on the logical qubits,
we will make a controlled version of this.
We use the following identity:
\begin{align}
    (CCZ)_{123} (^{C_a}X_1) (^{C_b}X_2) (^{C_c}X_3) (CCZ)_{123} 
    = 
\left[ ^{C_a}(CZ_{23}X_1) \right]
\left[ ^{C_b}(CZ_{13}X_2) \right]
\left[ ^{C_c}(CZ_{12}X_3) \right]
\label{eq:3stabCCZ}
\end{align}
which is the product of three stabilizers of $CCZ$-state
controlled by three independent ancillas.
The transversality of the logical operator $(CZ)X$ implies that
if we apply \eqref{eq:3stabCCZ} transversally across a triple of normal codes,
then the three ancillas will know the eigenvalue of the three stabilizers of $CCZ$, respectively.
The non-Clifford gate $CCZ$ in \eqref{eq:3stabCCZ} can be injected
using $4$ $T$-gates~\cite{song2004optimal,jones2013low}.

This method of measuring stabilizers of $CCZ$ state,
compared to that in the previous section using the hyperbolic codes,
has advantage that one does not have to repeat three times for each of three stabilizers,
but has disadvantage that one needs roughly a factor of three space overhead.
(The space overhead comparison is not completely fair, 
because a code cannot be simultaneously normal and hyperbolic.
However, in the large code length limit this factor of 3 
in the space overhead is appropriate.)
In the large code length limit, this method also has an advantage in terms of $T$-count.
Using the hyperbolic codes, even if the encoding rate is near one,
we need 12 $T$ gates per $CCZ$-state under the test.
On the other hand, using \eqref{eq:3stabCCZ} on a normal code of encoding rate near one,
we need 8 $T$ gates per $CCZ$-state under the test.

Now the protocol at quartic order is as follows.
Prepare $\kin$ $CCZ$-states from the quadratic protocol using 4-qubit code.
This consumes $8\kin$ $T$-gates with independent error probability $p$.
Embed them into the triple of normal code of parameter $[[\nin,\kin,4]]$
with each qubit of the $CCZ$ states into a different code block.
Apply \eqref{eq:3stabCCZ}; this step consumes $8\nin$ $T$ gates 
with independent error probability $p$.
Upon no violation of code's stabilizers and ancillas,
decode the logical qubits and output $\kin$ $CCZ$ states.

This is a quartic protocol
as the output is faulty only if
(i) an even number of input $CCZ$ states are faulty, which happens at order $(p^2)^2$,
(ii) an odd number of input $CCZ$ states are faulty but missed by a flipped ancilla outcome,
which happens at order $p^2 \cdot p^2$,
(iii) some error in the inner code is a logical error,
which happens at order $p^d = p^4$,
or (iv) some other error of higher order occurred.
The total number of $T$ gates used is $8\kin + 8 \nin$.

There are normal codes of encoding rate greater than $2/3$ 
and code distance 4 or higher on tens of qubits,
including quantum BCH codes $[[63,45,4]]$~\cite{qbch} 
and ``$H$-codes'' of parameters $[[k^2+4k+4,k^2,4]]$ where $k$ is even~\cite{Jones2012}.
Random constructions~\cite{CalderbankShor1996Good,HHPW} 
guarantee such codes of encoding rate near one in the limit of large code length.
The input $T$ count in the current quartic protocol 
using a high rate inner code approaches $16$ per output $CCZ$.

In terms of input $T$ count, to the best of the authors' knowledge,
this family is better than any previous $T$-to-$CCZ$ protocol
with quartic order of error reduction.

We can bootstrap the protocol to have a family of protocols 
for $d=2^\alpha, \alpha \ge 2$.
The construction is inductive in $\alpha$.
Fix an inner code $[[\nin,\kin,4]]$. 
(This is for simplicity of presentation, and is not necessity.)
The quartic protocol above is the base case in the induction.
Suppose we have constructed a $2^\alpha$-th order protocol $P_\alpha$
and a $2^{\alpha-1}$-th order protocol $P_{\alpha-1}$
using $n_\alpha$, $n_{\alpha-1}$ $T$ gates per output $CCZ$,
respectively.
The protocol is then:
(1) Run $P_\alpha$ many times to prepare independent input states at error rate $p^{2^{\alpha}}$.
(2) Embed them into the triples of the inner code.
(3) Apply \eqref{eq:3stabCCZ} 
where $CCZ$-gates are injected by outputs from $P_{\alpha-1}$.
(4) Upon no violation of the code's stabilizers, output the logical qubits.
The order of reduction in error can be seen by considering the cases 
(i), (ii), and (iii) above.
In all cases, the order of the error is $2 \cdot 2^{\alpha} = 2^{\alpha+1}$,
$2^{\alpha} \cdot (2 \cdot 2^{\alpha-1}) = 2^{\alpha+1}$,
or $4 \cdot 2^{\alpha-1} = 2^{\alpha+1}$.
Step (1) takes $n_\alpha$ $T$-gates per $CCZ$ state by induction hypothesis.
For $\kin$ sufficiently close to $\nin$,
step (3) takes $2 n_{\alpha-1}$ $T$-gates per $CCZ$. Hence,
$n_{\alpha+1} \simeq n_\alpha + 2 n_{\alpha-1}$, and
\begin{align}
    n_\alpha \simeq 4 \cdot 2^{\alpha} = 4 d 
\end{align}
since $n_1 = 8$ and $n_2 = 16$.

It is possible to combine the idea presented here 
with that of \cite{campbell2017unified} to reduce the input $T$-count
at the expense of dealing with a larger batch.
At $d=2$, \cite{campbell2017unified} has asymptotic $T$-count $n_1 = 6$.
At $d=4$, using a high encoding rate normal code,
the input $T$-count approaches $n_2 = 8+6=14$, instead of $16$.
At $d=8$, the count becomes $14+2 \cdot 6 = 26$, instead of $32$.
At a larger $d$ that is a power of 2, 
in the limit of large code length,
the $T$-count approaches $(2/3)(5 \cdot 2^\alpha + (-1)^\alpha)$
which is at most $3.33 d + 0.67$.

Note that this bootstrapping for large $\alpha$
must involve a quite large number of qubits
to ensure the independence of the input $CCZ$ states,
and the $CCZ$-gates on the inner code.
The usage of \cite{campbell2017unified} further enlarges the necessary batch size.

We finally note that 
for $d = 2d' \ge 10$ with $d'$ odd,
one can first use the protocol of \cite{HHPW} 
to produce a $T$ gate with error at $d'$-th order
where $d'\ge 5$ is odd using $d' + o(1)$ $T$ gates per output $T$, and then 
use $T$-to-$CCZ$ protocol of \cite{campbell2017unified}
to have $CCZ$ states with error at $(d=2d')$-th order.
This combination will give $T$ count $3d$ per output $CCZ$.

\bibliographystyle{apsrev4-1}
\nocite{apsrev41Control}
\bibliography{rtrio-ref}

\end{document}